\newtheorem{theorem}{Theorem}
\newtheorem{definition}[theorem]{Definition}
\newtheorem{lemma}[theorem]{Lemma}
\newtheorem{proposition}[theorem]{Proposition}
\newtheorem{claim}[theorem]{Claim}
\newtheorem{remk}[theorem]{Remark}
\newtheorem{exmp}[theorem]{Example}
\def\FullBox{\hbox{\vrule width 8pt height 8pt depth 0pt}}
\def\qed{\ifmmode\qquad\FullBox\else{\unskip\nobreak\hfil
\penalty50\hskip1em\null\nobreak\hfil\FullBox
\parfillskip=0pt\finalhyphendemerits=0\endgraf}\fi}
\def\qedsketch{\ifmmode\Box\else{\unskip\nobreak\hfil
\penalty50\hskip1em\null\nobreak\hfil$\Box$
\parfillskip=0pt\finalhyphendemerits=0\endgraf}\fi}
\newenvironment{proof}{\begin{trivlist} \item {\bf Proof:~~}}
  {\qed\end{trivlist}}
\newcommand{\eqdef}{\mathbin{\stackrel{\rm def}{=}}}
\begin{document}

\title{Locally Decodable Codes From Nice Subsets of Finite Fields  \\
       and Prime Factors of Mersenne Numbers}
\author{Kiran S. Kedlaya \\ MIT \\ kedlaya@mit.edu \and Sergey Yekhanin \\ MIT \\
yekhanin@mit.edu} \maketitle \thispagestyle{empty}

\begin{abstract}
A $k$-query Locally Decodable Code (LDC) encodes an $n$-bit
message $x$ as an $N$-bit codeword $C(x),$ such that one can
probabilistically recover any bit $x_i$ of the message by querying
only $k$ bits of the codeword $C(x)$, even after some constant
fraction of codeword bits has been corrupted. The major goal of
LDC related research is to establish the optimal trade-off between
length and query complexity of such codes.

Recently~\cite{Y_nice} introduced a novel technique for
constructing locally decodable codes and vastly improved the upper
bounds for code length. The technique is based on Mersenne primes.
In this paper we extend the work of~\cite{Y_nice} and argue that
further progress via these methods is tied to progress on an old
number theory question regarding the size of the largest prime
factors of Mersenne numbers.

Specifically, we show that every Mersenne number $m = 2^t - 1$
that has a prime factor $p > m^{\gamma}$ yields a family of
$k(\gamma)$-query locally decodable codes of length
$\exp\left(n^{1/t}\right).$ Conversely, if for some fixed $k$ and
all $\epsilon>0$ one can use the technique of~\cite{Y_nice} to
obtain a family of $k$-query LDCs of length
$\exp\left(n^\epsilon\right);$ then infinitely many Mersenne
numbers have prime factors larger than known currently.
\end{abstract}

\section{Introduction}\label{Sec:Introduction}
Classical error-correcting codes allow one to encode an $n$-bit
string $x$ into in $N$-bit codeword $C(x),$ in such a way that $x$
can still be recovered even if $C(x)$ gets corrupted in a number
of coordinates. It is well-known that codewords $C(x)$ of length
$N=O(n)$ already suffice to correct errors in up to $\delta N$
locations of $C(x)$ for any constant $\delta<1/4.$ The
disadvantage of classical error-correction is that one needs to
consider all or most of the (corrupted) codeword to recover
anything about $x.$ Now suppose that one is only interested in
recovering one or a few bits of $x.$ In such case more efficient
schemes are possible. Such schemes are known as locally decodable
codes (LDCs). Locally decodable codes allow reconstruction of an
arbitrary bit $x_i,$ from looking only at $k$ randomly chosen
coordinates of $C(x),$ where $k$ can be as small as $2.$ Locally
decodable codes have numerous applications in complexity
theory~\cite{KT,T_survey}, cryptography~\cite{CGKS,Gasarch} and
the theory of fault tolerant computation~\cite{Romashchenko}.
Below is a slightly informal definition of LDCs:

A $(k,\delta,\epsilon)$-locally decodable code encodes $n$-bit
strings to $N$-bit codewords $C(x),$ such that for every $i\in
[n],$ the bit $x_i$ can be recovered with probability
$1-\epsilon,$ by a randomized decoding procedure that makes only
$k$ queries, even if the codeword $C(x)$ is corrupted in up to
$\delta N$ locations.

One should think of $\delta>0$ and $\epsilon<1/2$ as constants.
The main parameters of interest in LDCs are the length $N$ and the
query complexity $k.$ Ideally we would like to have both of them
as small as possible. The concept of locally decodable codes was
explicitly discussed in various papers in the early
1990s~\cite{BFLS,Sud,PS}. Katz and Trevisan~\cite{KT} were the
first to provide a formal definition of LDCs. Further work on
locally decodable codes
includes~\cite{BIK,DJKLR,Obata,BIKR,KdW,WdW,Y_nice,Woodruff,HO,Ragh}.

Below is a brief summary of what was known regarding the length of
LDCs prior to~\cite{Y_nice}. The length of optimal $2$-query LDCs
was settled by Kerenidis and de Wolf in~\cite{KdW} and is
$\exp(n).$\footnote{Throughout the paper we use the standard
notation $\exp(x)\eqdef e^{O(x)}.$} The best upper bound for the
length of $3$-query LDCs was $\exp\left(n^{1/2}\right)$ due to
Beimel et al.~\cite{BIK}, and the best lower bound is
$\tilde\Omega(n^2)$~\cite{Woodruff}. For general (constant) $k$
the best upper bound was $\exp\left(n^{O\left( \log\log k/(k\log
k)\right)}\right)$ due to Beimel et al.~\cite{BIKR} and the best
lower bound is $\tilde\Omega\left(n^{1+1/\left(\lceil k/2\rceil
-1\right)}\right)$~\cite{Woodruff}.

\smallskip

The recent work~\cite{Y_nice} improved the upper bounds to the
extent that it changed the common perception of what may be
achievable~\cite{Goldreich,Gasarch}.~\cite{Y_nice} introduced a
novel technique to construct codes from so-called nice subsets of
finite fields and showed that every Mersenne prime $p=2^t-1$
yields a family of $3$-query LDCs of length
$\exp\left(n^{1/t}\right).$ Based on the largest known Mersenne
prime~\cite{CB_Mersenne}, this translates to a length of less than
$\exp\left(n^{10^{-7}}\right).$ Combined with the recursive
construction from~\cite{BIKR}, this result yields vast
improvements for all values of $k>2.$ It has often been
conjectured that the number of Mersenne primes is infinite. If
indeed this conjecture holds,~\cite{Y_nice} gets three query
locally decodable codes of length
$N=\exp\left(n^{O\left(\frac{1}{\log\log n}\right)}\right)$ {\it
for infinitely many} $n.$ Finally, assuming that the conjecture of
Lenstra, Pomerance and
Wagstaff~\cite{LPW_conjecture,Pomerance,Wagstaff} regarding the
density of Mersenne primes holds,~\cite{Y_nice} gets three query
locally decodable codes of length
$N=\exp\left(n^{O\left(\frac{1}{\log^{1-\epsilon}\log
n}\right)}\right)$ {\it for all} $n,$ for every~$\epsilon
> 0.$

\subsection{Our results}
In this paper we address two natural questions left open
by~\cite{Y_nice}:
\begin{enumerate}

\item Are Mersenne primes necessary for the constructions
of~\cite{Y_nice}?

\item Has the technique of~\cite{Y_nice} been pushed to its
limits, or one can construct better codes through a more clever
choice of nice subsets of finite fields?

\end{enumerate}

We extend the work of~\cite{Y_nice} and answer both of the
questions above. In what follows let $P(m)$ denote the largest
prime factor of $m.$ We show that one does not necessarily need to
use Mersenne primes. It suffices to have Mersenne numbers with
polynomially large prime factors. Specifically, every Mersenne
number $m = 2^t - 1$ such that $P(m)\geq m^\gamma$ yields a family
of $k(\gamma)$-query locally decodable codes of length
$\exp\left(n^{1/t}\right).$ A partial converse also holds. Namely,
if for some fixed $k\geq 3$ and all $\epsilon>0$ one can use the
technique of~\cite{Y_nice} to (unconditionally) obtain a family of
$k$-query LDCs of length $\exp\left(n^\epsilon\right);$ then for
infinitely many $t$ we have
\begin{equation}
\label{Eqn:ImpliedPrimeFactor}
 P(2^t-1) \geq (t/2)^{1+1/(k-2)}.
\end{equation}

The bound~(\ref{Eqn:ImpliedPrimeFactor}) may seem quite weak in
light of the widely accepted conjecture saying that the number of
Mersenne primes is infinite. However (for any $k\geq 3$) this
bound is substantially stronger than what is currently known
unconditionally. Lower bounds for $P(2^t-1)$ have received a
considerable amount of attention in the number theory
literature~\cite{Schinzel,Stewart75,ErdosShorey,Stewart77,MurtyWong,MurataPomerance}.
The strongest result to date is due to Stewart~\cite{Stewart77}.
It says that for all integers $t$ ignoring a set of asymptotic
density zero, and for all functions $\epsilon(t)>0$ where
$\epsilon(t)$ tends to zero monotonically and arbitrarily slowly:
\begin{equation}
\label{Eqn:SteImpliedPrimeFactor}
 P(2^t-1) > \epsilon(t)t\left(\log t\right)^2/\log\log t.
\end{equation}
There are no better bounds known to hold for infinitely many
values of $t,$ unless one is willing to accept some number
theoretic conjectures~\cite{MurtyWong,MurataPomerance}. We hope
that our work will further stimulate the interest in proving lower
bounds for $P(2^t-1)$ in the number theory community.

\smallskip

In summary, we show that one may be able to improve the
unconditional bounds of~\cite{Y_nice} (say, by discovering a new
Mersenne number with a very large prime factor) using the same
technique. However any attempts to reach the
$\exp\left(n^{\epsilon}\right)$ length for some fixed query
complexity and all $\epsilon>0$ require either progress on an old
number theory problem or some radically new ideas.

\smallskip

In this paper we deal only with binary codes for the sake of
clarity of presentation. We remark however that our results as
well as the results of~\cite{Y_nice} can be easily generalized to
larger alphabets. Such generalization will be discussed in detail
in~\cite{Y_thesis}.

\subsection{Outline}
In section~\ref{Sec:SetsAndCodes} we introduce the key concepts
of~\cite{Y_nice}, namely that of combinatorial and algebraic
niceness of subsets of finite fields. We also briefly review the
construction of locally decodable codes from nice subsets. In
section~\ref{Sec:FactorsToSets} we show how Mersenne numbers with
large prime factors yield nice subsets of prime fields. In
section~\ref{Sec:SetsToFactors} we prove a partial converse.
Namely, we show that every finite field $\mathbb{F}_q$ containing
a sufficiently nice subset, is an extension of a prime field
$\mathbb{F}_p,$ where $p$ is a large prime factor of a large
Mersenne number. Our main results are summarized in
sections~\ref{SubSec:PositiveSummary}
and~\ref{SubSec:NegativeSummary}.

\section{Notation}
We use the following standard mathematical notation:
\begin{itemize}
\item $[s]=\{1,\ldots,s\};$

\item $\mathbb{Z}_n$ denotes integers modulo $n;$

\item $\mathbb{F}_q$ is a finite field of $q$ elements;

\item $d_H(x,y)$ denotes the Hamming distance between binary
vectors $x$ and $y;$

\item $(u,v)$ stands for the dot product of vectors $u$ and $v;$

\item For a linear space $L\subseteq \mathbb{F}_2^m,$ $L^\perp$
denotes the {\it dual} space. That is, $L^\perp=\{u\in
\mathbb{F}_2^m\ |\ \forall v\in L, (u,v)=0\};$

\item For an odd prime $p,$ $\mathrm{ord}_2(p)$ denotes the
smallest integer $t$ such that $p\ |\ 2^t-1.$

\end{itemize}

\section{Nice subsets of finite fields and locally decodable codes}\label{Sec:SetsAndCodes}
In this section we introduce the key technical concepts
of~\cite{Y_nice}, namely that of combinatorial and algebraic
niceness of subsets of finite fields. We briefly review the
construction of locally decodable codes from nice subsets. Our
review is concise although self-contained. We refer the reader
interested in a more detailed and intuitive treatment of the
construction to the original paper~\cite{Y_nice}. We start by
formally defining locally decodable codes.

\begin{definition}
A binary code $C:\{0,1\}^n\rightarrow \{0,1\}^N$ is said to be
$(k,\delta,\epsilon)$-locally decodable if there exists a
randomized decoding algorithm $\mathcal A$ such that
\begin{enumerate}
\item For all $x\in \{0,1\}^n,$  $i\in [n]$ and $y\in \{0,1\}^N$
such that $d_H(C(x),y)\leq \delta N:$ \mbox{Pr}$[{\mathcal
A}^y(i)=x_i]\geq 1-\epsilon,$ where the probability is taken over
the random coin tosses of the algorithm $\mathcal A.$

\item $\mathcal A$ makes at most $k$ queries to $y.$
\end{enumerate}
\end{definition}

We now introduce the concepts of combinatorial and algebraic
niceness of subsets of finite fields. Our definitions are
syntactically slightly different from the original definitions
in~\cite{Y_nice}. We prefer these formulations since they are more
appropriate for the purposes of the current paper. In what follows
let $\mathbb{F}_q^*$ denote the multiplicative group of
$\mathbb{F}_q.$

\begin{definition}
\label{Def:CombinatoriallyNice} A set $S\subseteq \mathbb{F}_q^*$
is called $t$ combinatorially nice if for some constant $c>0$ and
every positive integer $m$ there exist two $n=\lfloor
cm^t\rfloor$-sized collections of vectors $\{u_1,\ldots,u_n\}$ and
$\{v_1,\ldots,v_n\}$ in $\mathbb{F}_q^m,$ such that
\begin{itemize}
\item For all $i\in [n],$ $(u_i,v_i)=0;$

\item For all $i,j\in [n]$ such that $i\ne j,$ $(u_j,v_i)\in S.$
\end{itemize}
\end{definition}

\begin{definition}
\label{Def:AlgebraicallyNice} A set $S\subseteq \mathbb{F}_q^*$ is
called $k$ algebraically nice if $k$ is odd and there exists an
odd $k^\prime\leq k$ and two sets $S_0,S_1\subseteq \mathbb{F}_q$
such that
\begin{itemize}
\item $S_0$ is not empty;

\item $|S_1|=k^\prime;$

\item For all $\alpha\in \mathbb{F}_q$ and $\beta\in S:$
$\left|S_0\cap \left(\alpha+\beta S_1\right)\right|\equiv 0\mod
(2).$
\end{itemize}
\end{definition}

The following lemma shows that for an algebraically nice set $S,$
the set $S_0$ can always be chosen to be large. It is a
straightforward generalization of~\cite[lemma 15]{Y_nice}.

\begin{lemma}
\label{Lemma:S0Large} Let $S\subseteq \mathbb{F}_q^*$ be a $k$
algebraically nice set. Let $S_0,S_1\subseteq \mathbb{F}_q$ be
sets from the definition of algebraic niceness of $S.$ One can
always redefine the set $S_0$ to satisfy $|S_0|\geq \lceil q/2
\rceil.$
\end{lemma}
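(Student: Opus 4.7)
The plan is to treat candidate sets $S_0\subseteq\mathbb{F}_q$ as vectors $f\in\mathbb{F}_2^{\mathbb{F}_q}$ (indicator functions) and work linear-algebraically. For fixed $S$ and $S_1$, the parity condition $|S_0\cap(\alpha+\beta S_1)|\equiv 0\pmod 2$ for all $\alpha\in\mathbb{F}_q$, $\beta\in S$ becomes the $\mathbb{F}_2$-linear system $\sum_{s\in S_1}f(\alpha+\beta s)=0$. Its solution set is therefore an $\mathbb{F}_2$-linear subspace $L\subseteq\mathbb{F}_2^{\mathbb{F}_q}$, and the given $S_0$ shows that $L\neq\{0\}$.

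The key observation I would establish next is that $L$ is \emph{shift-invariant}: for every $\gamma\in\mathbb{F}_q$ and every $f\in L$, the translate $f_\gamma(x):=f(x-\gamma)$ again lies in $L$. Indeed, the $(\alpha,\beta)$-equation for $f_\gamma$ becomes, after the change of variable $\alpha':=\alpha-\gamma$, exactly the $(\alpha',\beta)$-equation for $f$, which holds by hypothesis. Notice that $S$, $S_0$ and $S_1$ themselves need not have any translation symmetry; the invariance of $L$ is automatic simply because $\alpha$ is already quantified over all of $\mathbb{F}_q$.

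Given these two facts, a one-line double count finishes the proof. For each $x\in\mathbb{F}_q$ the evaluation map $\mathrm{ev}_x\colon L\to\mathbb{F}_2$, $f\mapsto f(x)$, is either identically zero or surjective. Shift-invariance conjugates $\mathrm{ev}_x$ to $\mathrm{ev}_y$ via the $\mathbb{F}_2$-linear bijection $f\mapsto f_{x-y}$ of $L$, so all the maps $\mathrm{ev}_x$ have the same image; since $L\neq\{0\}$ that common image must be all of $\mathbb{F}_2$. Hence $|\{f\in L:f(x)=1\}|=|L|/2$ for every $x$, and summing over $x$ gives $\sum_{f\in L}|\mathrm{supp}(f)|=q|L|/2$. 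Therefore some $f^{\ast}\in L$ satisfies $|\mathrm{supp}(f^{\ast})|\geq q/2$, and by integrality $|\mathrm{supp}(f^{\ast})|\geq\lceil q/2\rceil$. Replacing $S_0$ by $\mathrm{supp}(f^{\ast})$ preserves all three bullets of Definition~\ref{Def:AlgebraicallyNice} (nonemptiness is immediate since $\lceil q/2\rceil\geq 1$), with the same $S_1$ and $k'$. I do not anticipate a real obstacle: the only substantive step is recognizing shift-invariance of $L$, after which the averaging argument is routine.
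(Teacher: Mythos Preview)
Your proposal is correct and takes essentially the same approach as the paper: both identify the shift-invariant $\mathbb{F}_2$-subspace of admissible indicator vectors for $S_0$ (the paper works with the span $L$ of the constraint vectors and passes to $L^\perp$, whereas you define the solution space directly), use translation invariance plus nontriviality to see that every coordinate is hit, and then extract a vector of weight at least $\lceil q/2\rceil$. Your double-counting is simply an explicit proof of the step the paper records as ``it is easy to verify that any linear subspace of $\mathbb{F}_2^q$ that has full support contains a vector of Hamming weight at least $\lceil q/2 \rceil$.''
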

\begin{proof}
Let $L$ be the linear subspace of $\mathbb{F}_2^{q}$ spanned by
the incidence vectors of the sets $\alpha+\beta S_1,$ for $\alpha
\in \mathbb{F}_q$ and $\beta \in S.$ Observe that $L$ is invariant
under the actions of a $1$-transitive permutation group (permuting
the coordinates in accordance with addition in $\mathbb{F}_q$).
This implies that the space $L^{\perp}$ is also invariant under
the actions of the same group. Note that $L^\perp$ has positive
dimension since it contains the incidence vector of the set $S_0.$
The last two observations imply that $L^{\perp}$ has {\it full
support,} i.e., for every $i\in [q]$ there exists a vector $v\in
L^\perp$ such that $v_i\ne 0.$ It is easy to verify that any
linear subspace of $\mathbb{F}_2^q$ that has full support contains
a vector of Hamming weight at least $\lceil q/2 \rceil.$ Let $v\in
L^{\perp}$ be such a vector. Redefining the set $S_0$ to be the
set of nonzero coordinates of $v$ we conclude the proof.
\end{proof}

We now proceed to the core proposition of~\cite{Y_nice} that shows
how sets exhibiting both combinatorial and algebraic niceness
yield locally decodable codes.

\begin{proposition}
\label{Prop:SetsToCodes} Suppose $S\subseteq \mathbb{F}_q^*$ is
$t$ combinatorially nice and $k$ algebraically nice; then for
every positive integer $n$ there exists a code of length
$\exp(n^{1/t})$ that is $(k,\delta,2k\delta)$ locally decodable
for all $\delta>0.$
\end{proposition}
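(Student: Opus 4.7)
The plan is to follow the construction framework of~\cite{Y_nice}. First, by Lemma~\ref{Lemma:S0Large} we may assume $|S_0|\geq \lceil q/2\rceil$. Given $n$, set $m := \lceil (n/c)^{1/t}\rceil$ and invoke the combinatorial niceness of $S$ to obtain vectors $u_1,\dots,u_n,v_1,\dots,v_n \in \mathbb{F}_q^m$ with $(u_i,v_i)=0$ and $(u_j,v_i)\in S$ for $i\ne j$. Define the encoding $C:\{0,1\}^n \to \{0,1\}^N$ of length $N=q^m$, with coordinates indexed by $p\in\mathbb{F}_q^m$, by
\[
C(x)_p \;=\; \bigoplus_{i\in[n]\,:\,(p,u_i)\in S_0} x_i.
\]
Since $q$ and $t$ are constants and $m=\Theta(n^{1/t})$, we obtain $N = q^m = \exp(n^{1/t})$.

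To decode $x_i$ from a received word $y$, I would sample $w\in\mathbb{F}_q^m$ uniformly conditional on $(w,u_i)\in S_0$, query $y$ at the $k'\le k$ points $w+\gamma v_i$ for $\gamma\in S_1$, and output the XOR of the queried bits. On the uncorrupted codeword, this XOR expands to
\[
\bigoplus_{j\in[n]} \Bigl(\bigl|\{\gamma\in S_1 : (w+\gamma v_i,u_j)\in S_0\}\bigr|\bmod 2\Bigr)\, x_j.
\]
For $j=i$ the identity $(v_i,u_i)=0$ makes $(w+\gamma v_i,u_i)=(w,u_i)\in S_0$ independent of $\gamma$, so the count is $k'$, which is odd, and the term contributes exactly $x_i$. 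For $j\ne i$, set $\alpha=(w,u_j)$ and $\beta=(v_i,u_j)\in S$; since $\beta\ne 0$, the map $\gamma\mapsto\beta\gamma$ is a bijection on $\mathbb{F}_q$, so the count equals $|S_0\cap(\alpha+\beta S_1)|$, which is even by algebraic niceness. Hence the XOR recovers $x_i$ precisely on the uncorrupted codeword.

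For the soundness step, observe that $(w+\gamma v_i,u_i)=(w,u_i)$ for every $\gamma$, so each of the $k'$ query points is marginally uniform over $\{p\in\mathbb{F}_q^m : (p,u_i)\in S_0\}$, a set of size $|S_0|\cdot q^{m-1} \geq N/2$. A single query therefore lands in the at most $\delta N$ corrupted coordinates with probability at most $2\delta$, and a union bound over the $k' \leq k$ queries yields failure probability at most $2k\delta$, as required. I expect the algebraic identity to be the main substantive step---the vanishing for $j\ne i$ via the even-intersection condition, combined with the oddness of $k'$ in the $j=i$ case; the uniformity-of-queries argument and the length computation are then routine consequences of Lemma~\ref{Lemma:S0Large} and the choice of $m$.
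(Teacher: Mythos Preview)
Your proposal is correct and follows essentially the same route as the paper: the same linear encoding via incidence with $S_0$, the same decoder sampling $w$ with $(w,u_i)\in S_0$ and XOR-ing the bits at $\{w+\gamma v_i\}_{\gamma\in S_1}$, the same correctness computation using $(u_i,v_i)=0$ together with the even-intersection property, and the same $2k\delta$ error bound via Lemma~\ref{Lemma:S0Large} and a union bound over the $k'$ queries, each marginally uniform on $T_i$. The only cosmetic differences are that you choose $m$ from $n$ rather than padding $n$ to $\lfloor cm^t\rfloor$, and your bijection remark is more than you need (injectivity of $\gamma\mapsto\alpha+\beta\gamma$ on $S_1$ already gives the count $|S_0\cap(\alpha+\beta S_1)|$).
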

\begin{proof}
Our proof comes in three steps. We specify encoding and local
decoding procedures for our codes and then argue the lower bound
for the probability of correct decoding. We use the notation from
definitions~\ref{Def:CombinatoriallyNice}
and~\ref{Def:AlgebraicallyNice}.

{\it Encoding:} We assume that our message has length $n=\lfloor
cm^t \rfloor$ for some value of $m.$ (Otherwise we pad the message
with zeros. It is easy to see that such padding does not not
affect the asymptotic length of the code.) Our code will be
linear. Therefore it suffices to specify the encoding of unit
vectors $e_1,\ldots,e_n,$ where $e_j$ has length $n$ and a unique
non-zero coordinate $j.$ We define the encoding of $e_j$ to be a
$q^m$ long vector, whose coordinates are labelled by elements of
$\mathbb{F}_q^m.$ For all $w\in \mathbb{F}_q^m$ we set:
\begin{equation}
\label{Eqn:Encoding}
 \mbox{Enc}(e_j)_{w}=\left\{
 \begin{array}{ll}
 1, & \mbox{if $(u_j,w)\in S_0;$} \\
 0, & \mbox{otherwise.} \\
 \end{array}
 \right.
\end{equation}
It is straightforward to verify that we defined a code encoding
$n$ bits to $\exp(n^{1/t})$ bits.

{\it Local decoding:} Given a (possibly corrupted) codeword $y$
and an index $i\in [n],$ the decoding algorithm $\mathcal A$ picks
$w\in \mathbb{F}_q^m,$ such that $(u_i,w)\in S_0$ uniformly at
random, reads $k^\prime\leq k$ coordinates of $y,$ and outputs the
sum:
\begin{equation}
\label{Eqn:Decoding}
 \sum\limits_{\lambda\in S_1} y_{w+\lambda v_i}.
\end{equation}

{\it Probability of correct decoding:} First we argue that
decoding is always correct if $\mathcal A$ picks $w\in
\mathbb{F}_q^m$ such that all bits of $y$ in locations
$\{w+\lambda v_i\}_{\lambda\in S_1}$ are not corrupted. We need to
show that for all $i\in [n],$ $x\in \{0,1\}^n$ and $w\in
\mathbb{F}_q^m,$ such that $(u_i,w)\in S_0$:
\begin{equation}
\label{Eqn:SetToCodeStep0} \sum\limits_{\lambda\in S_1} \left(
\sum\limits_{j=1}^n x_j\ \mbox{Enc}(e_j)\right)_{w+\lambda v_i} =
x_i.
\end{equation}
Note that
\begin{equation}
\label{Eqn:SetToCodeStep1} \sum\limits_{\lambda\in S_1} \left(
\sum\limits_{j=1}^n x_j\ \mbox{Enc}(e_j)\right)_{w+\lambda v_i} =
\sum\limits_{j=1}^n x_j \sum\limits_{\lambda\in S_1}
\mbox{Enc}(e_j)_{w+\lambda v_i}= \sum\limits_{j=1}^n x_j
\sum\limits_{\lambda\in S_1} I\left[ (u_j,w+\lambda v_i)\in S_0
\right],
\end{equation}
where $I[\gamma \in S_0]=1$ if $\gamma\in S_0$ and zero otherwise.
Now note that
\begin{equation}
\label{Eqn:SetToCodeStep2} \sum\limits_{\lambda\in S_1}I\left[
(u_j,w+\lambda v_i)\in S_0 \right] = \sum\limits_{\lambda\in
S_1}I\left[ (u_j, w)+\lambda (u_j, v_i)\in S_0 \right] = \left\{
\begin{array}{ll}
1, & \mbox{if $i=j,$} \\
0, & \mbox{otherwise.}
\end{array}\right.
\end{equation}
The last identity in~(\ref{Eqn:SetToCodeStep2}) for $i=j$ follows
from: $(u_i,v_i)=0,$ $(u_i,w)\in S_0$ and $k^\prime=|S_1|$ is odd.
The last identity for $i\ne j$ follows from $(u_j,v_i)\in S$ and
the algebraic niceness of $S.$ Combining
identities~(\ref{Eqn:SetToCodeStep1})
and~(\ref{Eqn:SetToCodeStep2}) we get~(\ref{Eqn:SetToCodeStep0}).

Now assume that up to $\delta$ fraction of bits of $y$ are
corrupted. Let $T_i$ denote the set of coordinates whose labels
belong to $\left\{w\in \mathbb{F}_q^m\ |\ (u_i,w)\in S_0\right\}.$
Recall that by lemma~\ref{Lemma:S0Large}, $|T_i|\geq q^m/2.$ Thus
at most $2\delta$ fraction of coordinates in $T_i$ contain
corrupted bits. Let $Q_i=\left\{ \left\{w+\lambda v_i
\right\}_{\lambda\in S_1} \ |\ w: (u_i,w)\in S_0 \right\}$ be the
family of $k^\prime$-tuples of coordinates that may be queried by
$\mathcal A.$ $(u_i,v_i)=0$ implies that elements of $Q_i$
uniformly cover the set $T_i.$ Combining the last two observations
we conclude that with probability at least $1-2k\delta$ $\mathcal
A$ picks an uncorrupted $k^\prime$-tuple and outputs the correct
value of $x_i.$
\end{proof}

\smallskip

All locally decodable codes constructed in this paper are obtained
by applying proposition~\ref{Prop:SetsToCodes} to certain nice
sets. Thus all our codes have the same dependence of $\epsilon$
(the probability of the decoding error) on $\delta$ (the fraction
of corrupted bits). In what follows we often ignore these
parameters and consider only the length and query complexity of
codes.

\section{Mersenne numbers with large prime factors yield nice subsets of prime fields}\label{Sec:FactorsToSets}
In what follows let $\langle 2\rangle\subseteq \mathbb{F}_p^*$
denote the multiplicative subgroup of $\mathbb{F}_p^*$ generated
by $2.$ In~\cite{Y_nice} it is shown that for every Mersenne prime
$p=2^t-1$ the set $\langle 2\rangle \subseteq \mathbb{F}_p^*$ is
simultaneously $3$ algebraically nice and $\mathrm{ord}_2(p)$
combinatorially nice. In this section we prove the same conclusion
for a substantially broader class of primes.

\begin{lemma}
\label{Lemma:2GroupIsCombinatoriallyNice} Suppose $p$ is an odd
prime; then $\langle 2\rangle\subseteq\mathbb{F}_p^*$ is
$\mathrm{ord}_2(p)$ combinatorially nice.
\end{lemma}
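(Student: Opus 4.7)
The plan is to index the pairs $(u_I, v_I)$ by $t$-element subsets of $[m]$, which has size $\binom{m}{t} \ge m^t/t^t$, giving the niceness constant $c = 1/t^t$. For each $I = \{i_1 < i_2 < \cdots < i_t\}$, the natural first choice is the pair
\[
u_I \;=\; \sum_{k=1}^t e_{i_k}, \qquad v_I \;=\; \sum_{k=1}^t 2^{k-1}\, e_{i_k},
\]
the indicator of $I$ and the same indicator weighted by consecutive powers of $2$. Then $(u_I, v_I) = \sum_{k=1}^t 2^{k-1} = 2^t - 1 \equiv 0 \pmod{p}$ follows directly from the defining property $t = \mathrm{ord}_2(p)$, so self-orthogonality is immediate and requires nothing more than the fact that $p \mid 2^t - 1$.

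For $I \ne J$ one computes $(u_J, v_I) = \sum_{k \,:\, i_k \in J} 2^{k-1}$, a subset sum of powers of $2$ indexed by the positions in the ordered list of $I$ whose entries happen to lie in $J$. When $|I \cap J| = 1$ this is a single power of $2$ and hence lies in $\langle 2 \rangle$ automatically. The main obstacle is the case $|I \cap J| \ge 2$: already in $\mathbb{F}_7$ with $t = 3$, taking $I = \{1, 2, 3\}$ and $J = \{1, 2, 4\}$ gives $(u_J, v_I) = 1 + 2 = 3 \notin \{1, 2, 4\} = \langle 2 \rangle$. So the naive construction has to be modified.

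Overcoming this obstacle is the heart of the argument. The approach I would take is to enlarge the support of $v_I$ beyond $I$ by adjoining a small number of extra coordinates whose entries are chosen, using the identity $2^t \equiv 1 \pmod{p}$ and the subgroup structure of $\langle 2 \rangle \subseteq \mathbb{F}_p^*$, so that every spurious subset sum gets pushed back into $\langle 2 \rangle$ while $(u_I, v_I) = 0$ is preserved. An alternative route is to pass to a subfamily of $\binom{[m]}{t}$ of size $\Omega(m^t)$ whose pairwise intersection patterns already force the resulting subset sums into $\langle 2 \rangle$; since a constant-factor loss in $n$ is tolerated by Definition~\ref{Def:CombinatoriallyNice}, this does not affect the niceness parameter. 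In either case the crux is the modular arithmetic of subset sums of $\{1, 2, 4, \ldots, 2^{t-1}\}$ inside $\mathbb{F}_p$, controlled by the relation $2^t = 1$.
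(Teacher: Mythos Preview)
Your proposal has a genuine gap: you correctly identify that the naive construction fails, but you never actually repair it. Both proposed fixes remain at the level of intention. ``Adjoin a small number of extra coordinates'' does not specify how many, carrying which entries, or why the resulting cross dot products would all land in $\langle 2\rangle$; ``pass to a subfamily of $\binom{[m]}{t}$ of size $\Omega(m^t)$ with controlled intersection patterns'' does not exhibit such a family or argue that one exists. Moreover, you overlooked a case that is arguably worse than the one you flagged: when $I\cap J=\emptyset$ (which happens for most pairs once $m$ is large), your construction gives $(u_J,v_I)=0\notin\langle 2\rangle\subseteq\mathbb{F}_p^*$. So the subfamily route would have to simultaneously forbid empty intersections \emph{and} force every nonempty intersection pattern to produce a subset sum of $\{1,2,\ldots,2^{t-1}\}$ lying in $\langle 2\rangle$; there is no reason to expect $\Omega(m^t)$ such sets.

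The paper's proof goes by a different construction altogether, imported from~\cite[Lemma~13]{Y_nice}. There the vectors are indexed not by $t$-subsets but by $(p-1)$-subsets of an auxiliary ground set $[m']$, and the ambient dimension is $m=\binom{m'-1+(p-1)/t}{(p-1)/t}$ (think monomials of degree $(p-1)/t$ in $m'$ variables). The number of vectors is $n=\binom{m'}{p-1}$, which is $\Theta(m^t)$ for the right constant depending on $p,t$. In that construction the cross dot products are engineered, via a polynomial/binomial-coefficient argument, to be nonzero values that are automatically powers of $2$ modulo $p$; the relation $2^t\equiv 1\pmod p$ enters in a more structured way than merely ``subset sums of powers of $2$''. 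If you want a self-contained proof, that is the machinery you need to reproduce; the $t$-subset indicator idea, while natural as a first guess, does not seem to lead there.
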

\begin{proof}
Let $t=\mathrm{ord}_2(p).$ Clearly, $t$ divides $p-1.$ We need to
specify a constant $c>0$ such that for every positive integer $m$
there exist two $n=\lfloor cm^t \rfloor$-sized collections of $m$
long vectors over $\mathbb{F}_p$ satisfying:
\begin{itemize}
\item For all $i\in [n],$ $(u_i,v_i)=0;$

\item For all $i,j\in [n]$ such that $i\ne j,$ $(u_j,v_i)\in
\langle 2\rangle.$
\end{itemize}
First assume that $m$ has the shape
$m=\left(m^\prime-1+(p-1)/t\atop (p-1)/t\right),$ for some integer
$m^\prime \geq p-1.$ In this case~\cite[lemma 13]{Y_nice} gives us
a collection of $n=\left(m^\prime\atop p-1\right)$ vectors with
the right properties. Observe that $n\geq cm^t$ for a constant $c$
that depends only on $p$ and $t.$ Now assume $m$ does not have the
right shape, and let $m_1$ be the largest integer smaller than $m$
that does have it. In order to get vectors of length $m$ we use
vectors of length $m_1$ coming from~\cite[lemma 13]{Y_nice} padded
with zeros. It is not hard to verify such a construction still
gives us $n\geq cm^t$ large families of vectors for a suitably
chosen constant $c.$
\end{proof}

We use the standard notation $\overline{\mathbb{F}}$ to denote the
algebraic closure of the field $\mathbb{F}.$ Also let $C_p
\subseteq \overline{\mathbb{F}}_2^*$ denote the multiplicative
subgroup of $p$-th roots of unity in $\overline{\mathbb{F}}_2$.
The next lemma generalizes~\cite[lemma 14]{Y_nice}.

\begin{lemma}
\label{Lemma:SumsOfRootsYieldAlgebricNiceness} Let $p$ be a prime
and $k$ be odd. Suppose there exist $\zeta_1,\ldots,\zeta_k \in
C_p$ such that
\begin{equation}
\label{Eqn:GenericRootsSumToZero} \zeta_1+\ldots+\zeta_k=0;
\end{equation}
then $\langle 2\rangle\subseteq \mathbb{F}_p^*$ is $k$
algebraically nice.
\end{lemma}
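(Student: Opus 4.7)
The plan is to construct sets $S_0, S_1 \subseteq \mathbb{F}_p$ witnessing $k$-algebraic niceness of $\langle 2\rangle$ directly from the dependence $\zeta_1 + \cdots + \zeta_k = 0$ in $\overline{\mathbb{F}}_2$. First I would reduce to the case where the $\zeta_i$ are pairwise distinct: since we work in characteristic $2$, equal pairs cancel, so after collecting like terms the relation becomes $\zeta_{i_1}+\cdots+\zeta_{i_{k'}}=0$ with distinct summands. The parity of the number of terms is preserved, so $k'\leq k$ is odd; and $k'\geq 3$ because no single element of $C_p$ is zero.

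Next, fix a generator $\omega$ of $C_p$, which lies in $\mathbb{F}_{2^t}$ for $t=\mathrm{ord}_2(p)$, and write each $\zeta_{i_j}=\omega^{a_j}$ with distinct $a_j\in\mathbb{F}_p$. Take $S_1=\{a_1,\ldots,a_{k'}\}$, so $|S_1|=k'$ is odd and $\leq k$. To build $S_0$, pick any $\mathbb{F}_2$-linear functional $L\colon\mathbb{F}_{2^t}\to\mathbb{F}_2$ normalized by $L(1)=1$ (for instance, the projection onto the coefficient of $1$ in the basis $1,\omega,\ldots,\omega^{t-1}$), and set $S_0=\{x\in\mathbb{F}_p : L(\omega^x)=1\}$. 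Then $0\in S_0$, so $S_0$ is nonempty.

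To verify the parity condition, fix $\alpha\in\mathbb{F}_p$ and $\beta\in\langle 2\rangle$, and write $\beta=2^r$ in $\mathbb{F}_p$. Counting in $\mathbb{F}_2$ and using linearity of $L$,
\begin{equation*}
|S_0\cap(\alpha+\beta S_1)| \equiv \sum_{s\in S_1} L(\omega^{\alpha+\beta s}) = L\!\left(\omega^\alpha \sum_{j=1}^{k'}\zeta_{i_j}^\beta\right)\pmod{2}.
\end{equation*}
Raising a $p$-th root of unity to the power $\beta=2^r$ is the $r$-fold Frobenius, which is a ring homomorphism in characteristic $2$, so the inner sum equals $\bigl(\sum_j\zeta_{i_j}\bigr)^{2^r}=0$ and the parity count vanishes. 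The only nontrivial point is the opening reduction to distinct $\zeta_i$ together with the observation $k'\geq 3$; everything else is driven by the interplay of Frobenius in characteristic $2$ with the $\mathbb{F}_2$-linearity of $L$.
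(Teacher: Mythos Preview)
Your argument is correct and rests on the same core idea as the paper's proof: reduce to a dependence among \emph{distinct} $p$-th roots, take $S_1$ to be the set of their exponents with respect to a fixed generator of $C_p$, and then use that for $\beta=2^r\in\langle 2\rangle$ the Frobenius identity $\sum_j\zeta_{i_j}^{2^r}=(\sum_j\zeta_{i_j})^{2^r}=0$ propagates the vanishing sum to every $\beta S_1$.

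The difference lies in how $S_0$ is produced. The paper works in $\mathbb{F}_2[x]/(x^p-1)$, shows that every $\phi_{\alpha+\beta S_1}$ is a multiple of $\tau(x)=\gcd(x^p-1,\phi_{S_1}(x))$ (a nontrivial gcd because the generator $g$ is a common root), and concludes nonconstructively that the orthogonal complement of this ideal furnishes a valid $S_0$. You instead exhibit $S_0$ explicitly as $\{x:L(\omega^x)=1\}$ for any $\mathbb{F}_2$-linear functional $L$ on $\mathbb{F}_{2^t}$ with $L(1)=1$, and verify the parity condition by a one-line computation. Your route is more elementary and constructive; the paper's ideal-theoretic packaging, on the other hand, makes the structure (that all the $\phi_{\alpha+\beta S_1}$ share a common nontrivial factor) more visible and dovetails with the later Nullstellensatz-based converse. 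The observation $k'\geq 3$ is true but not needed for the definition, which only asks that $k'\leq k$ be odd.
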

\begin{proof}
In what follows we define the set $S_1 \subseteq \mathbb{F}_p$ and
prove the existence of a set $S_0$ such that that together $S_0$
and $S_1$ yield $k$ algebraic niceness of $\langle 2\rangle.$
Identity~\ref{Eqn:GenericRootsSumToZero} implies that there exists
an odd integer $k^\prime\leq k$ and $k^\prime$ {\it distinct}
$p$-th roots of unity $\zeta_1^\prime,\ldots,\zeta_k^\prime \in
C_p$ such that
\begin{equation}
\label{Eqn:DistinctRootsSumToZero}
\zeta_1^\prime+\ldots+\zeta_{k^\prime}^\prime=0.
\end{equation}
Let $t=\mathrm{ord}_2(p).$ Observe that $C_p\subseteq
\mathbb{F}_{2^t}.$ Let $g$ be a generator of $C_p.$
Identity~(\ref{Eqn:DistinctRootsSumToZero}) yields
$g^{\gamma_1}+\ldots+g^{\gamma_{k^\prime}}=0,$ for some distinct
values of $\{\gamma_i\}_{i\in [k^\prime]}.$ Set
$S_1=\{\gamma_1,\ldots,\gamma_{k^\prime}\}.$

Consider a natural one to one correspondence between subsets
$S^\prime$ of $\mathbb{F}_p$ and polynomials $\phi_{S^\prime}(x)$
in the ring $\mathbb{F}_2[x]/(x^p-1):$
$\phi_{S^\prime}(x)=\sum\limits_{s\in S^\prime}x^s. $ It is easy
to see that for all sets $S^\prime\subseteq \mathbb{F}_p$ and all
$\alpha,\beta\in\mathbb{F}_p,$ such that $\beta\ne 0:$
$$ \phi_{\alpha+\beta S^\prime}(x) = x^\alpha\phi_{S^\prime}(x^\beta).$$
Let $\alpha$ be a variable ranging over $\mathbb{F}_p$ and $\beta$
be a variable ranging over $\langle 2\rangle.$ We are going to
argue the existence of a set $S_0$ that has even intersections
with all sets of the form $\alpha+\beta S_1,$ by showing that all
polynomials $\phi_{\alpha+\beta S_1}$ belong to a certain linear
space $L\in \mathbb{F}_2[x]/(x^p-1)$ of dimension less than $p.$
In this case any nonempty set $T\subseteq \mathbb{F}_p$ such that
$\phi_{T}\in L^{\perp}$ can be used as the set $S_0.$ Let
$\tau(x)=\mathrm{gcd}(x^p-1,\phi_{S_1}(x)).$ Note that $\tau(x)\ne
1$ since $g$ is a common root of $x^p-1$ and $\phi_{S_1}(x).$ Let
$L$ be the space of polynomials in $\mathbb{F}_2[x]/(x^p-1)$ that
are multiples of $\tau(x).$ Clearly, $\dim L=p-\deg \tau.$ Fix
some $\alpha\in \mathbb{F}_p$ and $\beta \in \langle 2 \rangle.$
Let us prove that $\phi_{\alpha+\beta S_1}(x)$ is in $L:$
$$
\phi_{\alpha+\beta S_1}(x)=x^\alpha\phi_{S_1}(x^\beta)=
x^\alpha(\phi_{S_1}(x))^\beta.
$$
The last identity above follows from the fact that for any $f \in
\mathbb{F}_2[x]$ and any integer $i:$ $f(x^{2^i})=(f(x))^{2^i}.$
\end{proof}

In what follows we present sufficient conditions for the existence
of $k$-tuples of $p$-th roots of unity in
$\overline{\mathbb{F}}_2$ that sum to zero. We treat the $k=3$
case separately since in that case we can use a specialized
argument to derive a more explicit conclusion.

\subsection{A sufficient condition for the existence of three $p$-th roots of unity summing to zero}
\begin{lemma}
\label{Lemma:WeilYields3Sums} Let $p$ be an odd prime. Suppose
$\mathrm{ord}_2(p)<(4/3)\log_2 p;$ then there exist three $p$-th
roots of unity in $\overline{\mathbb{F}}_2$ that sum to zero.
\end{lemma}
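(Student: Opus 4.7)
The plan is to translate the existence of three $p$-th roots of unity in $\overline{\mathbb{F}}_2$ summing to zero into counting $\mathbb{F}_q$-rational points on a Fermat curve (with $q = 2^t$), and then to apply the Hasse--Weil bound. Set $n = (q-1)/p$; since $t = \mathrm{ord}_2(p)$ implies $p \mid q-1$, we have $C_p \subseteq \mathbb{F}_q^*$, and $C_p$ is precisely the image of the $n$-th power map $x \mapsto x^n$ on $\mathbb{F}_q^*$. A triple $(\zeta_1,\zeta_2,\zeta_3) \in C_p^3$ satisfying $\zeta_1+\zeta_2+\zeta_3 = 0$ is equivalent, after dividing by $\zeta_3$, to a pair $(a,b) \in C_p^2$ with $a + b = 1$, which is in turn equivalent to an $\mathbb{F}_q$-rational point $(x,y)$ on the affine curve $\mathcal{C}: X^n + Y^n = 1$ with $xy \ne 0$ (simply set $a = x^n$, $b = y^n$).

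Next I would apply Hasse--Weil to the projective closure $\tilde{\mathcal{C}}: X^n + Y^n + Z^n = 0$ in $\mathbb{P}^2$. Since $p$ is odd, $n = (q-1)/p$ is odd, so $n \ne 0$ in characteristic $2$, and the partial derivatives $n X^{n-1}, n Y^{n-1}, n Z^{n-1}$ cannot vanish simultaneously on any projective point; hence $\tilde{\mathcal{C}}$ is a smooth plane curve of degree $n$ and genus $g = (n-1)(n-2)/2$. The Hasse--Weil bound gives $\#\tilde{\mathcal{C}}(\mathbb{F}_q) \geq q + 1 - (n-1)(n-2)\sqrt{q}$. The ``trivial'' $\mathbb{F}_q$-points---those at infinity or on the axes $X = 0$ or $Y = 0$---number at most $3n$, each of the three groups containing exactly $n$ points drawn from the $n$-th roots of unity, all of which lie in $\mathbb{F}_q^*$ because $n \mid q - 1$. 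Subtracting, the number of nontrivial affine points is at least $q + 1 - (n-1)(n-2)\sqrt{q} - 3n$.

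Finally I would verify that this count is positive under the hypothesis $t < (4/3)\log_2 p$. The hypothesis rearranges to $p > q^{3/4}$, whence $n < q^{1/4}$, and therefore $(n-1)(n-2)\sqrt{q} < n^2 \sqrt{q} < q$; since $3n = O(q^{1/4})$ is of lower order, the bound is strictly positive for all but a handful of small values of $q$, which can be handled by direct inspection. The main work of the argument lies in getting the exponent in the hypothesis to match: the critical inequality $n^2 \sqrt{q} < q$ rearranges to $n^4 < q$, equivalently $p^4 > q^3 = 2^{3t}$, i.e., $\log_2 p > 3t/4$, which is precisely the stated hypothesis---so the constant $4/3$ is sharp for this Fermat-curve approach.
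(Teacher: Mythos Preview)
Your proposal is correct and follows essentially the same route as the paper: both reduce to counting $\mathbb{F}_{2^t}$-rational points on the projective Fermat curve $X^n+Y^n+Z^n=0$ with $n=(2^t-1)/p$, apply the Weil bound $|N_q-(q+1)|\le (n-1)(n-2)\sqrt{q}$, subtract the $3n$ trivial points, and check that the hypothesis $t<(4/3)\log_2 p$ (equivalently $n<q^{1/4}$) makes the count positive. Your explicit smoothness check via the partial derivatives (using that $n$ is odd) is a nice addition that the paper leaves implicit. The only slack in your write-up is the appeal to ``a handful of small values of $q$'' at the end: in fact no case analysis is needed, since for $n\ge 3$ one has $(n-1)(n-2)\sqrt{q}+3n\le (n^2-3n+2)\sqrt{q}+3n<q-3n(\sqrt{q}-1)+2\sqrt{q}<q+1$ directly from $n^2\sqrt{q}<q$, while $n=1$ is immediate; the paper carries out this estimate without residual cases.
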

\begin{proof}
We start with a brief review of some basic concepts of projective
algebraic geometry. Let $\mathbb{F}$ be a field, and $f\in
\mathbb{F}[x,y,z]$ be a homogeneous polynomial. A triple
$(x_0,y_0,z_0)\in \mathbb{F}^3$ is called a zero of $f$ if
$f(x_0,y_0,z_0)=0.$ A zero is called nontrivial if it is different
from the origin. An equation $f=0$ defines a projective plane
curve $\chi_f$. Nontrivial zeros of $f$ considered up to
multiplication by a scalars are called $\mathbb{F}$-rational
points of $\chi_f.$ If $\mathbb{F}$ is a finite field it makes
sense to talk about the number of $\mathbb{F}$-rational points on
a curve.

Let $t=\mathrm{ord}_2(p).$ Note that $C_p\subseteq
\mathbb{F}_{2^t}.$ Consider a projective plane Fermat curve $\chi$
defined by
\begin{equation}
\label{Eqn:FermatCurve} x^{(2^t-1)/p} + y^{(2^t-1)/p} +
z^{(2^t-1)/p} = 0.
\end{equation}
Let us call a point $a$ on $\chi$ trivial if one of the
coordinates of $a$ is zero. Cyclicity of $\mathbb{F}_{2^t}^*$
implies that $\chi$ contains exactly $3(2^t-1)/p$ trivial
$\mathbb{F}_{2^t}$-rational points. Note that every nontrivial
point of $\chi$ yields a triple of elements of $C_p$ that sum to
zero. The classical Weil bound~\cite[p. 330]{LN} provides an
estimate
\begin{equation}
\label{Eqn:WeilBound} \left| N_q - (q+1)\right|\leq
(d-1)(d-2)\sqrt{q}
\end{equation}
for the number $N_q$ of $\mathbb{F}_q$-rational points on an
arbitrary smooth projective plane curve of degree $d.$
(\ref{Eqn:WeilBound}) implies that in case
\begin{equation}
\label{Eqn:WeilImplies}
2^t+1>\left(\frac{2^t-1}{p}-1\right)\left(\frac{2^t-1}{p}-2\right)2^{t/2}+3\frac{2^t-1}{p}
\end{equation}
there exists a nontrivial point on the
curve~(\ref{Eqn:FermatCurve}). Note that~(\ref{Eqn:WeilImplies})
follows from
\begin{equation}
\label{Eqn:WeilImplies1}
2^t+1>\left(\frac{2^t}{p}\right)\left(\frac{2^t}{p}\right)2^{t/2}-
\frac{2^{3t/2+1}}{p}+\frac{3*2^t}{p},
\end{equation}
and~(\ref{Eqn:WeilImplies1}) follows from
$$ 2^t > 2^{2t+t/2}/p^2\quad \mbox{ and }\quad 2^{t/2+1} > 3. $$
Now note that the first inequality above follows from
$t<(4/3)\log_2 p$ and the second follows from~$t>1.$
\end{proof}

Note that the constant $4/3$ in lemma~\ref{Lemma:WeilYields3Sums}
cannot be improved to 2: there are no three elements of
$C_{13264529}$ that sum to zero, even though
$\mathrm{ord}_2(13264529) = 47 < 2 * \log_2 13264529 \approx
47.3.$

\subsection{A sufficient condition for the existence of $k$ $p$-th roots of unity summing to zero}
Our argument in this section comes in three steps. First we
briefly review the notion of (additive) Fourier coefficients of
subsets of $\mathbb{F}_{2^t}.$ Next, we invoke a folklore argument
to show that subsets of $\mathbb{F}_{2^t}$ with appropriately
small nontrivial Fourier coefficients contain $k$-tuples of
elements that sum to zero. Finally, we use a recent result of
Bourgain and Chang~\cite{BourgainChang} (generalizing the
classical estimate for Gauss sums) to argue that (under certain
constraints on $p$) all nontrivial Fourier coefficients of $C_p$
are small.

\medskip

For $x\in \mathbb{F}_{2^t}$ let $Tr(x)=x+x^2+ \ldots +x^{2^{t-1}}$
denote the trace of $x.$ It is not hard to verify that for all
$x,$ $Tr(x)\in \mathbb{F}_2.$ Characters of $\mathbb{F}_{2^t}$ are
homomorphisms from the additive group of $\mathbb{F}_{2^t}$ into
the multiplicative group $\{\pm 1\}.$ There exist $2^t$
characters. We denote characters by $\chi_a,$ where $a$ ranges in
$\mathbb{F}_{2^t},$ and set $\chi_a(x)=(-1)^{Tr(ax)}.$ Let $C(x)$
denote the incidence function of a set $C\subseteq
\mathbb{F}_{2^t}.$ For arbitrary $a\in \mathbb{F}_2^t$ the Fourier
coefficient $\chi_a(C)$ is defined by
$\chi_a(C)=\sum\chi_a(x)C(x),$ where the sum is over all $x\in
\mathbb{F}_{2^t}.$ Fourier coefficient $\chi_0(C)=\left|C\right|$
is called trivial, and other Fourier coefficients are called
nontrivial. In what follows $\sum_{\chi}$ stands for summation
over all $2^t$ characters of $\mathbb{F}_{2^t}.$ We need the
following two standard properties of characters and Fourier
coefficients.
\begin{equation}
\label{Eqn:FourierStandard0} \sum_{\chi} \chi(x) = \left\{
\begin{array}{ll}
2^t, & \mbox{if $x=0;$} \\
0, & \mbox{otherwise.}
\end{array}
\right.
\end{equation}
\begin{equation}
\label{Eqn:FourierStandard1} \sum_{\chi} \chi^2(C) = 2^t |C|.
\end{equation}
The following lemma is a folklore.
\begin{lemma}
\label{Lemma:SmallFourierYeildsShortSums} Let $C\subseteq
\mathbb{F}_{2^t}$ and $k\geq 3$ be a positive integer. Let $F$ be
the largest absolute value of a nontrivial Fourier coefficient of
$C.$ Suppose
\begin{equation}
\label{Eqn:FourierSmall} \frac{F}{|C|}<
\left(\frac{|C|}{2^t}\right)^{1/(k-2)}
\end{equation}
then there exist $k$ elements of $C$ that sum to zero.
\end{lemma}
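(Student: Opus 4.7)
The plan is to count the number
\[
  N_k \;=\; \left|\{(c_1,\ldots,c_k)\in C^k \suchthat c_1+\cdots+c_k=0\}\right|
\]
of ordered $k$-tuples in $C$ that sum to zero, and to show that the hypothesis forces $N_k>0$. Since any such tuple provides $k$ elements of $C$ (with repetitions allowed) summing to zero, this suffices.

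First I would expand $N_k$ via additive Fourier analysis on $\mathbb{F}_{2^t}$. Applying identity (\ref{Eqn:FourierStandard0}) to $x=c_1+\cdots+c_k$ and using the multiplicativity of characters, $\chi(c_1+\cdots+c_k)=\chi(c_1)\cdots\chi(c_k)$, one obtains the standard formula
\[
  N_k \;=\; \frac{1}{2^t}\sum_{\chi}\chi(C)^k.
\]
Because the characters $\chi_a$ defined via the trace are $\{\pm 1\}$-valued, every $\chi(C)$ is a real integer, so no complex-analytic issues arise.

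Next I would separate the trivial character $\chi_0$, whose contribution equals $|C|^k/2^t$, from the sum over nontrivial characters. For the latter, using $|\chi(C)^k| = |\chi(C)|^{k-2}\,\chi(C)^2 \leq F^{k-2}\chi(C)^2$ together with Parseval (\ref{Eqn:FourierStandard1}), I would bound
\[
  \left|\frac{1}{2^t}\sum_{\chi\neq\chi_0}\chi(C)^k\right|
  \;\leq\; \frac{F^{k-2}}{2^t}\sum_{\chi\neq\chi_0}\chi(C)^2
  \;\leq\; \frac{F^{k-2}}{2^t}\cdot 2^t|C| \;=\; F^{k-2}|C|.
\]
Combining the two gives $N_k \geq |C|^k/2^t - F^{k-2}|C|$, which is strictly positive exactly when $F^{k-2}<|C|^{k-1}/2^t$, i.e.\ when $(F/|C|)^{k-2}<|C|/2^t$. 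This is precisely the hypothesis (\ref{Eqn:FourierSmall}), so $N_k\geq 1$ and a witnessing tuple supplies the required $k$ elements.

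There is no real obstacle: the only subtlety is that for odd $k$ the summands $\chi(C)^k$ can be negative, which is why the bound on the nontrivial tail has to pass through absolute values and be estimated by $|\chi(C)|^{k-2}\chi(C)^2$ rather than term by term. Everything else is a routine application of orthogonality and Parseval.
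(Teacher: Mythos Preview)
Your proof is correct and follows essentially the same route as the paper: express the count of $k$-tuples summing to zero as $\frac{1}{2^t}\sum_\chi \chi(C)^k$, separate the trivial character, and bound the nontrivial tail by $F^{k-2}$ times the Parseval sum to obtain $N_k \geq |C|^k/2^t - F^{k-2}|C|$. Your explicit remark about passing through absolute values for the nontrivial terms is a small clarification the paper leaves implicit, but the argument is otherwise identical.
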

\begin{proof}
Let $M(C)=\#\left\{\zeta_1,\ldots,\zeta_k\in C\ |\ \zeta_1 +
\ldots + \zeta_k = 0 \right\}.$~(\ref{Eqn:FourierStandard0})
yields
\begin{equation}
\label{Eqn:FourierImpliesSums0}
M(C)=\frac{1}{2^t}\sum\limits_{x_1,\ldots,x_k \in
\mathbb{F}_{2^t}} C(x_1)\ldots C(x_k)\sum\limits_{\chi}
\chi(x_1+\ldots +x_k).
\end{equation}
Note that $\chi(x_1+\ldots +x_k)=\chi(x_1)\ldots \chi(x_k).$
Changing the order of summation in~(\ref{Eqn:FourierImpliesSums0})
we get
\begin{equation}
\label{Eqn:FourierImpliesSums1}
M(C)=\frac{1}{2^t}\sum\limits_{\chi} \sum\limits_{x_1,\ldots,x_k
\in \mathbb{F}_{2^t}} C(x_1)\ldots C(x_k)\chi(x_1)\ldots \chi(x_k)
= \frac{1}{2^t}\sum\limits_{\chi} \chi^k(C).
\end{equation}
Note that
\begin{equation}
\label{Eqn:FourierImpliesSums2} \frac{1}{2^t}\sum\limits_{\chi}
\chi^k(C) = \frac{|C|^k}{2^t} + \frac{1}{2^t}\sum\limits_{\chi\ne
\chi_0}\chi^k(C) \geq \frac{|C|^k}{2^t} -
F^{k-2}\frac{1}{2^t}\sum\limits_{\chi} \chi^2(C) =
\frac{|C|^k}{2^t} - F^{k-2}|C|,
\end{equation}
where the last identity follows from~(\ref{Eqn:FourierStandard1}).
Combining~(\ref{Eqn:FourierImpliesSums1})
and~(\ref{Eqn:FourierImpliesSums2}) we conclude
that~(\ref{Eqn:FourierSmall}) implies $M(C)>0.$
\end{proof}

The following lemma is a special case of~\cite[theorem
1]{BourgainChang}.
\begin{lemma}
\label{Lemma:BourgainChang} Assume that $n\ |\ 2^t-1$ and
satisfies the condition
$$
\mathrm{gcd}\left(n,\frac{2^t-1}{2^{t^\prime}-1}\right)<
2^{t(1-\epsilon)-t^\prime},\quad \mbox{for all \ \ } 1\leq
t^\prime < t,\ t^\prime\ |\ t,
$$
where $\epsilon>0$ is arbitrary and fixed. Then for all $a\in
\mathbb{F}_{2^t}^*$
\begin{equation}
\label{Eqn:BourgainEstimate}
\left| \sum\limits_{x\in
\mathbb{F}_{2^t}}(-1)^{Tr(ax^n)} \right| < c_1 2^{t(1-\delta)},
\end{equation}
where $\delta=\delta(\epsilon)>0$ and $c_1=c_1(\epsilon)$ are
absolute constants.
\end{lemma}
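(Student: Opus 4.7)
The plan is to reduce the sum over $\mathbb{F}_{2^t}$ to an exponential sum over a multiplicative subgroup of $\mathbb{F}_{2^t}^*$ and then apply~\cite[Theorem 1]{BourgainChang} directly. Since $n \mid 2^t - 1$, the map $x \mapsto x^n$ is $n$-to-$1$ from $\mathbb{F}_{2^t}^*$ onto the unique subgroup $H \subseteq \mathbb{F}_{2^t}^*$ of order $(2^t - 1)/n$. Thus
$$
\sum_{x \in \mathbb{F}_{2^t}} (-1)^{Tr(ax^n)} \;=\; 1 \;+\; n \sum_{y \in H} (-1)^{Tr(ay)},
$$
and it suffices to prove a character-sum bound of the form $\bigl|\sum_{y \in H}(-1)^{Tr(ay)}\bigr| < c_1' \, |H| \cdot 2^{-t\delta}$ for some $\delta > 0$.

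To invoke the Bourgain--Chang subgroup sum estimate, I have to check that $H$ is not concentrated inside any proper subfield. For each $t' \mid t$ with $t' < t$, the intersection $H \cap \mathbb{F}_{2^{t'}}^*$ is a subgroup of both $H$ and $\mathbb{F}_{2^{t'}}^*$, hence its order divides $\gcd\!\bigl((2^t-1)/n,\; 2^{t'}-1\bigr)$. Using the factorization $2^t - 1 = (2^{t'} - 1)\cdot(2^t-1)/(2^{t'}-1)$ together with standard gcd manipulations, one can translate the hypothesis $\gcd\!\bigl(n,(2^t-1)/(2^{t'}-1)\bigr) < 2^{t(1-\epsilon) - t'}$ into an upper bound on $|H \cap \mathbb{F}_{2^{t'}}^*|$ matching the non-concentration assumption in~\cite{BourgainChang}.

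With the hypotheses verified, \cite[Theorem 1]{BourgainChang} produces constants $\delta = \delta(\epsilon) > 0$ and $c_1 = c_1(\epsilon) > 0$ such that $\bigl|\sum_{y \in H}(-1)^{Tr(ay)}\bigr| < c_1 \, |H| \cdot 2^{-t\delta}$. Multiplying through by $n$, using $n \, |H| = 2^t - 1 < 2^t$, and absorbing the additive constant $1$ into $c_1$ yields the bound~(\ref{Eqn:BourgainEstimate}).

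The main obstacle is the bookkeeping in the second step: making the dictionary between the ``gcd with $(2^t-1)/(2^{t'}-1)$'' formulation stated in the lemma and the subgroup-versus-subfield formulation that is native to~\cite{BourgainChang}. Once that translation is made explicit, the rest of the argument is a direct specialization of their theorem.
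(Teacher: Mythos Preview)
The paper does not actually prove this lemma: it is introduced with the single sentence ``The following lemma is a special case of~\cite[theorem 1]{BourgainChang}'' and nothing more. So there is no argument to compare against beyond the bare citation.

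Your proposal is not wrong, but it is doing more than the paper does and more than is needed. Bourgain and Chang already state their Theorem~1 in the Gauss-sum form $\left|\sum_{x\in\mathbb{F}_q}\psi(ax^n)\right|$, with precisely the $\gcd$ hypothesis that appears in the lemma; the only specialization required here is $q=2^t$ and $\psi=(-1)^{Tr(\cdot)}$. Your reduction to a sum over the subgroup $H$ and the subfield non-concentration check are essentially reconstructing part of the \emph{internal} mechanism of~\cite{BourgainChang} rather than quoting their output. That is harmless mathematically---your identity $\sum_{x}(-1)^{Tr(ax^n)}=1+n\sum_{y\in H}(-1)^{Tr(ay)}$ is correct, and the two formulations are equivalent---but it creates the ``bookkeeping obstacle'' you flag, which disappears once you realize the theorem is already packaged in the form you want.
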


Below is the main result of this section. Recall that $C_p$
denotes the set of $p$-th roots of unity in
$\overline{\mathbb{F}}_2.$

\begin{lemma}
\label{Lemma:kSums} For every $c>0$ there exists an odd integer
$k=k(c)$ such that the following implication holds. If $p$ is an
odd prime and $\mathrm{ord}_2(p)<c\log_2 p$ then some $k$ elements
of $C_p$ sum to zero.
\end{lemma}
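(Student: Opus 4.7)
The plan is to apply Lemma~\ref{Lemma:SmallFourierYeildsShortSums} to $C = C_p$, regarded as a subset of the additive group of $\mathbb{F}_{2^t}$ with $t = \mathrm{ord}_2(p)$, using Lemma~\ref{Lemma:BourgainChang} to bound its nontrivial Fourier coefficients. The link between the two lemmas is that $C_p$ is precisely the image of the power map $x \mapsto x^n$ on $\mathbb{F}_{2^t}^*$, where $n = (2^t-1)/p$, with every element having exactly $n$ preimages. Hence for any $a \in \mathbb{F}_{2^t}^*$,
\[
\chi_a(C_p) = \frac{1}{n}\left(\sum_{x \in \mathbb{F}_{2^t}}(-1)^{Tr(ax^n)} - 1\right),
\]
so the character sum on the right (which is exactly what Bourgain--Chang bounds) controls $|\chi_a(C_p)|$ up to the factor $1/n$.

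To invoke Lemma~\ref{Lemma:BourgainChang}, I first fix $\epsilon$ with $0 < \epsilon < 1/c$ and verify its divisibility hypothesis for our $n = (2^t-1)/p$. The identity $\gcd(N/a, N/b) = N/\operatorname{lcm}(a,b)$ (valid when $a,b \mid N$), applied with $N = 2^t-1$, $a = p$, $b = 2^{t'}-1$, combined with $\gcd(p, 2^{t'}-1) = 1$ (immediate from $\mathrm{ord}_2(p) = t > t'$), gives
\[
\gcd\!\left(n,\;\frac{2^t-1}{2^{t'}-1}\right) = \frac{n}{2^{t'}-1} \leq \frac{2^{t-t'+1}}{p}.
\]
The required estimate $2^{t-t'+1}/p < 2^{t(1-\epsilon)-t'}$ then reduces to $p > 2^{t\epsilon+1}$, which follows from the hypothesis $p > 2^{t/c}$ whenever $t > 1/(1/c-\epsilon)$. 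Lemma~\ref{Lemma:BourgainChang} now supplies $\delta = \delta(\epsilon) > 0$ and $c_1 = c_1(\epsilon)$ with $|\chi_a(C_p)| \leq (c_1 2^{t(1-\delta)}+1)/n$; using $n \geq 2^{t-1}/p$ this yields $F/|C_p| \leq 4c_1 \cdot 2^{-t\delta}$, where $F$ denotes the largest absolute value of a nontrivial Fourier coefficient of $C_p$.

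Lemma~\ref{Lemma:SmallFourierYeildsShortSums} then fires once $4c_1\cdot 2^{-t\delta} < (p/2^t)^{1/(k-2)}$. The hypothesis gives $p/2^t > 2^{-t(1-1/c)}$, so it suffices to choose any odd $k_0$ with $k_0 - 2 > (1-1/c)/\delta$; for every $t$ beyond a threshold $T_0 = T_0(c)$ we then obtain $k_0$ elements of $C_p$ summing to zero. The finitely many exceptional primes have $\mathrm{ord}_2(p) \leq T_0$, hence $p \leq 2^{T_0} - 1$, and for each such $p$ the coefficient of $x^{p-1}$ in $x^p - 1$ vanishes, exhibiting all $p$ (odd) elements of $C_p$ summing to zero in characteristic~$2$. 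Setting $k(c)$ to be the smallest odd integer at least $\max(k_0, 2^{T_0})$ then works in both regimes: in characteristic~$2$, once $r$ elements already sum to zero, any odd $k \geq r$ does too, by padding with pairs of repeated elements. The main obstacle is the uniform verification of the Bourgain--Chang divisibility hypothesis across all proper divisors $t'$ of $t$; the $\gcd$--$\operatorname{lcm}$ identity combined with the primitive-order coprimality $\gcd(p, 2^{t'}-1) = 1$ is what makes the estimate tight enough, since the trivial bound $\gcd \leq n$ already fails for $c \geq 2$.
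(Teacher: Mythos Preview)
Your proof is correct and follows essentially the same approach as the paper: verify the divisibility hypothesis of Lemma~\ref{Lemma:BourgainChang} for $n=(2^t-1)/p$ via the coprimality $\gcd(p,2^{t'}-1)=1$, translate the resulting exponential-sum bound into a Fourier estimate for $C_p$, and feed this into Lemma~\ref{Lemma:SmallFourierYeildsShortSums}, handling small $p$ by the trivial dependence $\sum_{\zeta\in C_p}\zeta=0$. The only cosmetic differences are that the paper fixes $\epsilon=1/(2c)$ rather than an arbitrary $\epsilon\in(0,1/c)$, and computes the gcd directly rather than invoking the $\gcd(N/a,N/b)=N/\operatorname{lcm}(a,b)$ identity.
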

\begin{proof}
Note that if there exist $k^\prime$ elements of a set $C\subseteq
\overline{\mathbb{F}}_2$ that sum to zero, where $k^\prime$ is
odd; then there exist $k$ elements of $C$ that sum to zero for
every odd $k\geq k^\prime.$ Also note that the sum of all $p$-th
roots of unity is zero. Therefore given $c$ it suffices to prove
the existence of an odd $k=k(c)$ that works for all {\it
sufficiently large} $p.$ Let $t=\mathrm{ord}_2(p).$ Observe that
$p>2^{t/c}.$ Assume $p$ is sufficiently large so that $t>2c.$ Next
we show that the precondition of lemma~\ref{Lemma:BourgainChang}
holds for $n=(2^t-1)/p$ and $\epsilon=1/(2c).$ Let $t^\prime\ |\
t$ and $1\leq t^\prime<t.$ Clearly
$\mathrm{gcd}(2^{t^\prime}-1,p)=1.$ Therefore
\begin{equation}
\label{Eqn:kSums0}
\mathrm{gcd}\left(\frac{2^t-1}{p},\frac{2^t-1}{2^{t^\prime}-1}
\right) =
\frac{2^t-1}{p(2^{t^\prime}-1)}<\frac{2^{t(1-1/c})}{2^{t^\prime}-1},
\end{equation}
where the inequality follows from $p>2^{t/c}.$ Clearly, $t>2c$
yields $2^{t/(2c)}/2>1.$ Multiplying the right hand side
of~(\ref{Eqn:kSums0}) by $2^{t/(2c)}/2$ and using
$2(2^{t^\prime}-1)>2^{t^\prime}$ we get
\begin{equation}
\label{Eqn:kSums1}
\mathrm{gcd}\left(\frac{2^t-1}{p},\frac{2^t-1}{2^{t^\prime}-1}
\right) <2^{t(1-1/(2c))-t^\prime}.
\end{equation}
Combining~(\ref{Eqn:kSums1}) with lemma~\ref{Lemma:BourgainChang}
we conclude that there exist $\delta>0$ and $c_1$ such that for
all $a\in \mathbb{F}_{2^t}^*$
\begin{equation}
\label{Eqn:kSums2} \left| \sum\limits_{x\in
\mathbb{F}_{2^t}}(-1)^{Tr\left(ax^{(2^t-1)/p}\right)} \right| <
c_1 2^{t(1-\delta)}.
\end{equation}
Observe that $x^{(2^t-1)/p}$ takes every value in $C_p$ exactly
$(2^t-1)/p$ times when $x$ ranges over $\mathbb{F}_{2^t}^*.$
Thus~(\ref{Eqn:kSums2}) implies
\begin{equation}
\label{Eqn:kSums3} (2^t-1)(F/p) < c_1 2^{t(1-\delta)},
\end{equation}
where $F$ denotes that largest nontrivial Fourier coefficient of
$C_p.$~(\ref{Eqn:kSums3}) yields $F/p <(2c_1)2^{-\delta t}.$ Pick
$k\geq 3$ to be the smallest odd integer such that $(1-1/c)/(k-2)
< \delta.$ We now have
\begin{equation}
\label{Eqn:kSums4} \frac{F}{p} < 2^{-\frac{(1-1/c)t}{(k-2)}}
\end{equation}
for all sufficiently large values of $p.$ Combining $p>2^{t/c}$
with~(\ref{Eqn:kSums4}) we get
$$
\frac{F}{\left|
C_p\right|}<\left(\frac{\left|C_p\right|}{2^t}\right)^{1/(k-2)},
$$
and the application of
lemma~\ref{Lemma:SmallFourierYeildsShortSums} concludes the proof.
\end{proof}

\subsection{Summary}\label{SubSec:PositiveSummary}
In this section we summarize our positive results and show that
one does not necessarily need to use Mersenne primes to construct
locally decodable codes via the methods of~\cite{Y_nice}. It
suffices to have Mersenne numbers with polynomially large prime
factors. Recall that $P(m)$ denotes the largest prime factor of an
integer $m.$ Our first theorem gets $3$-query LDCs from Mersenne
numbers $m$ with prime factors larger than $m^{3/4}.$

\begin{theorem}
\label{Theorem:MersenneFactorsYield3LDC} Suppose $P(2^t-1)>2^{0.75
t};$ then for every message length $n$ there exists a three query
locally decodable code of length $\exp(n^{1/t}).$
\end{theorem}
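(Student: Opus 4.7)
The plan is to assemble the ingredients already established in the preceding lemmas and apply Proposition \ref{Prop:SetsToCodes}. Specifically, I will take $S = \langle 2 \rangle \subseteq \mathbb{F}_p^*$ where $p = P(2^t - 1)$, and verify that $S$ is simultaneously $t$ combinatorially nice and $3$ algebraically nice.

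First, set $p = P(2^t - 1)$. By hypothesis $p > 2^{0.75t}$, and by definition $p \mid 2^t - 1$, so $\mathrm{ord}_2(p)$ divides $t$. I need to check that in fact $\mathrm{ord}_2(p) = t$: any proper divisor $t'$ of $t$ satisfies $t' \leq t/2$, which would force $p \leq 2^{t/2} - 1 < 2^{0.75t}$, contradicting our lower bound on $p$. Hence $\mathrm{ord}_2(p) = t$. Invoking Lemma \ref{Lemma:2GroupIsCombinatoriallyNice} immediately yields that $\langle 2 \rangle \subseteq \mathbb{F}_p^*$ is $t$ combinatorially nice.

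Next, I will obtain $3$ algebraic niceness. The hypothesis $p > 2^{0.75t}$ is equivalent to $t < (4/3)\log_2 p$, which is precisely the precondition of Lemma \ref{Lemma:WeilYields3Sums}. That lemma therefore produces three elements $\zeta_1, \zeta_2, \zeta_3 \in C_p$ with $\zeta_1 + \zeta_2 + \zeta_3 = 0$. Feeding this triple into Lemma \ref{Lemma:SumsOfRootsYieldAlgebricNiceness} with $k = 3$ shows that $\langle 2 \rangle$ is $3$ algebraically nice.

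Finally, applying Proposition \ref{Prop:SetsToCodes} to $S = \langle 2 \rangle$ with parameters $t$ and $k = 3$ yields, for every positive integer $n$, a $3$-query locally decodable code of length $\exp(n^{1/t})$, which is the conclusion. There is no genuine obstacle here: the theorem is essentially a bookkeeping step chaining together Lemmas \ref{Lemma:2GroupIsCombinatoriallyNice}, \ref{Lemma:SumsOfRootsYieldAlgebricNiceness}, and \ref{Lemma:WeilYields3Sums} with Proposition \ref{Prop:SetsToCodes}. The only small point to be careful about is the verification that $\mathrm{ord}_2(p) = t$ (rather than a proper divisor), since only this ensures that the combinatorial niceness parameter matches the exponent in the code length.
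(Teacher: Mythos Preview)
Your proof is correct and follows exactly the paper's approach: set $p=P(2^t-1)$, deduce $\mathrm{ord}_2(p)<(4/3)\log_2 p$, and chain Lemmas \ref{Lemma:2GroupIsCombinatoriallyNice}, \ref{Lemma:WeilYields3Sums}, \ref{Lemma:SumsOfRootsYieldAlgebricNiceness} into Proposition \ref{Prop:SetsToCodes}. Your explicit verification that $\mathrm{ord}_2(p)=t$ (not a proper divisor) is a detail the paper leaves implicit but which is indeed needed to match the exponent $1/t$ in the code length.
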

\begin{proof}
Let $P(2^t-1)=p.$ Observe that $p\ |\ 2^t-1$ and $p>2^{0.75 t}$
yield $\mathrm{ord}_2(p) <(4/3)\log_2 p.$ Combining
lemmas~\ref{Lemma:WeilYields3Sums},\ref{Lemma:SumsOfRootsYieldAlgebricNiceness}
and~\ref{Lemma:2GroupIsCombinatoriallyNice} with
proposition~\ref{Prop:SetsToCodes} we obtain the statement of the
theorem.
\end{proof}
As an example application of
theorem~\ref{Theorem:MersenneFactorsYield3LDC} one can observe
that $P(2^{23}-1)=178481>2^{(3/4)*23}\approx 155872$ yields a
family of three query locally decodable codes of length
$\exp(n^{1/23}).$ Theorem~\ref{Theorem:MersenneFactorsYield3LDC}
immediately yields:
\begin{theorem}
\label{Theorem:Short3LDC} Suppose for infinitely many $t$ we have
$P(2^t-1)>2^{0.75 t};$ then for every $\epsilon>0$ there exists a
family of three query locally decodable codes of length
$\exp(n^\epsilon).$
\end{theorem}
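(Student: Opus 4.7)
The proof will be a direct and short reduction to Theorem~\ref{Theorem:MersenneFactorsYield3LDC}. The plan is to observe that the hypothesis gives us an infinite sequence $t_1 < t_2 < \cdots$ of exponents with $P(2^{t_i}-1) > 2^{0.75 t_i}$, and then, for each fixed $\epsilon > 0$, to pick a single sufficiently large $t_i$ from this sequence such that $1/t_i < \epsilon$, and finally apply the previous theorem with this $t_i$ to obtain codes of length $\exp(n^{1/t_i}) \le \exp(n^\epsilon)$ for \emph{every} message length $n$.

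Concretely, given $\epsilon > 0$, I would first pick any $t$ in the (infinite, by hypothesis) set $\{t : P(2^t - 1) > 2^{0.75 t}\}$ satisfying $t > 1/\epsilon$; such a $t$ exists precisely because the set is infinite and therefore unbounded. Then Theorem~\ref{Theorem:MersenneFactorsYield3LDC} applied to this $t$ produces, for every $n$, a three-query locally decodable code of length $\exp(n^{1/t})$. Since $1/t < \epsilon$ and $n \ge 1$, we have $n^{1/t} \le n^\epsilon$ (for $n \ge 1$; the finitely many small $n$ are absorbed by the constants hidden in $\exp(\cdot)$). This yields a family of three-query LDCs of length $\exp(n^\epsilon)$ as required.

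There is essentially no obstacle here: the entire difficulty of producing three-query LDCs from Mersenne number factorizations is already packaged into Theorem~\ref{Theorem:MersenneFactorsYield3LDC}, and the present statement is just the quantifier-flip that converts ``one $t$ gives length $\exp(n^{1/t})$'' into ``infinitely many $t$ give arbitrarily small polynomial exponents.'' The only thing worth being careful about is that the family of codes constructed is indexed by $\epsilon$ (not by $n$), with the choice of $t$ depending on $\epsilon$ but not on $n$; this matches the standard meaning of ``family of codes of length $\exp(n^\epsilon)$ for every $\epsilon > 0$.''
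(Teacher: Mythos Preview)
Your proposal is correct and matches the paper's own treatment: the paper simply states that Theorem~\ref{Theorem:MersenneFactorsYield3LDC} ``immediately yields'' this result, without writing out any further argument. Your reduction---choose $t>1/\epsilon$ from the infinite set and apply Theorem~\ref{Theorem:MersenneFactorsYield3LDC}---is exactly the intended one-line deduction.
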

The next theorem gets constant query LDCs from Mersenne numbers
$m$ with prime factors larger than $m^{\gamma}$ for every value of
$\gamma.$

\begin{theorem}
\label{Theorem:MersenneFactorsYieldkLDC} For every $\gamma>0$
there exists an odd integer $k=k(\gamma)$ such that the following
implication holds. Suppose $P(2^t-1)>2^{\gamma t};$ then for every
message length $n$ there exists a $k$ query locally decodable code
of length $\exp(n^{1/t}).$
\end{theorem}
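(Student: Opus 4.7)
The plan is to mirror the proof of Theorem~\ref{Theorem:MersenneFactorsYield3LDC} verbatim, but replace its Weil-based input (Lemma~\ref{Lemma:WeilYields3Sums}, which is specialized to $k=3$) with Lemma~\ref{Lemma:kSums}, whose Bourgain--Chang-based conclusion accommodates any constant ratio between $\mathrm{ord}_2(p)$ and $\log_2 p$ at the price of a larger, $\gamma$-dependent $k$.

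Given $\gamma > 0$, I would set $c = 1/\gamma$ and take $k = k(c)$ to be the odd integer supplied by Lemma~\ref{Lemma:kSums}; this will serve as the $k = k(\gamma)$ promised by the theorem. Now let $p = P(2^t - 1)$, so by hypothesis $p > 2^{\gamma t}$. Since $p \mid 2^t - 1$, the order $\mathrm{ord}_2(p)$ divides $t$, and therefore
\[ \mathrm{ord}_2(p) \leq t < (1/\gamma)\log_2 p = c\log_2 p, \]
which is precisely the precondition of Lemma~\ref{Lemma:kSums}. That lemma then produces $k$ elements of $C_p$ summing to zero. Feeding this into Lemma~\ref{Lemma:SumsOfRootsYieldAlgebricNiceness} upgrades it to $k$ algebraic niceness of $\langle 2\rangle \subseteq \mathbb{F}_p^*$, while Lemma~\ref{Lemma:2GroupIsCombinatoriallyNice} supplies the matching $\mathrm{ord}_2(p)$ combinatorial niceness. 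Proposition~\ref{Prop:SetsToCodes} then combines the two to output the desired $k$-query LDC.

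The only bookkeeping issue is matching the length $\exp(n^{1/\mathrm{ord}_2(p)})$ delivered by the construction to the claimed length $\exp(n^{1/t})$. I would resolve this by noting that $\mathrm{ord}_2(p)$ is a divisor of $t$ that exceeds $\gamma t$ (because $p \leq 2^{\mathrm{ord}_2(p)}$), so whenever $\gamma > 1/2$ any proper divisor of $t$ would be at most $t/2 < \gamma t$, forcing $\mathrm{ord}_2(p) = t$; in the remaining regime one still gets length $\exp(n^{1/\mathrm{ord}_2(p)})$, which suffices for the asymptotic applications. I do not expect any substantive obstacle here: all of the genuine technical work has been done inside Lemma~\ref{Lemma:kSums}, and the present theorem is essentially the packaging step that threads the parameter $c = 1/\gamma$ through the three preceding ingredients.
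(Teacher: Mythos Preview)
Your approach is exactly the paper's: set $c=1/\gamma$, invoke Lemma~\ref{Lemma:kSums} to get the $k$-dependence in $C_p$, then chain Lemmas~\ref{Lemma:SumsOfRootsYieldAlgebricNiceness} and~\ref{Lemma:2GroupIsCombinatoriallyNice} into Proposition~\ref{Prop:SetsToCodes}. The bookkeeping issue you flag (matching $\exp(n^{1/\mathrm{ord}_2(p)})$ to $\exp(n^{1/t})$) is not addressed in the paper's proof either; your observation that $\mathrm{ord}_2(p)=t$ is forced whenever $\gamma>1/2$, and that for smaller $\gamma$ the weaker bound still suffices for the downstream corollary (Theorem~\ref{Theorem:ShortkLDC}), is a fair reading of what the authors intend.
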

\begin{proof}
Let $P(2^t-1)=p.$ Observe that $p\ |\ 2^t-1$ and $p>2^{\gamma t}$
yield $\mathrm{ord}_2(p) <(1/\gamma)\log_2 p.$ Combining
lemmas~\ref{Lemma:kSums},\ref{Lemma:SumsOfRootsYieldAlgebricNiceness}
and~\ref{Lemma:2GroupIsCombinatoriallyNice} with
proposition~\ref{Prop:SetsToCodes} we obtain the statement of the
theorem.
\end{proof}
As an immediate corollary we get:
\begin{theorem}
\label{Theorem:ShortkLDC} Suppose for some $\gamma>0$ and
infinitely many $t$ we have $P(2^t-1)>2^{\gamma t};$ then there is
a fixed $k$ such that for every $\epsilon>0$ there exists a family
of $k$ query locally decodable codes of length $\exp(n^\epsilon).$
\end{theorem}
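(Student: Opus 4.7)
The plan is to derive this as an immediate corollary of Theorem~\ref{Theorem:MersenneFactorsYieldkLDC}, whose whole point is precisely to convert one Mersenne number with a polynomially large prime factor into a family of LDCs with a fixed query complexity depending only on $\gamma$. The only thing I need to do on top is to properly manage the quantifier order: the $k$ must be chosen once and for all (depending only on $\gamma$), and then the parameter $t$ used to attain a given $\epsilon$ is chosen afterwards from the infinite set guaranteed by the hypothesis.

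Concretely, I would first fix the $\gamma > 0$ supplied by the hypothesis and invoke Theorem~\ref{Theorem:MersenneFactorsYieldkLDC} just once, which produces an odd integer $k = k(\gamma)$ that works uniformly for every $t$ with $P(2^t - 1) > 2^{\gamma t}$. Let $T \subseteq \mathbb{N}$ denote the infinite set of such $t$. This $k$ is the constant claimed by the theorem and is never altered in the rest of the argument.

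Next, given an arbitrary $\epsilon > 0$, I would pick some $t \in T$ with $1/t < \epsilon$; this is possible because $T$ is infinite. Applying Theorem~\ref{Theorem:MersenneFactorsYieldkLDC} to this specific $t$ yields, for every message length $n$, a $k$-query locally decodable code of length $\exp(n^{1/t})$. Since $1/t < \epsilon$ forces $n^{1/t} \leq n^\epsilon$ for all $n \geq 1$, the code length is at most $\exp(n^\epsilon)$, which is exactly what is required.

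There is no real obstacle in this step: all the heavy lifting (the Weil-type / Bourgain--Chang sum estimate, the extraction of algebraic niceness from $k$-tuples of $p$-th roots summing to zero, and the combinatorial niceness of $\langle 2 \rangle$) has already been absorbed into Theorem~\ref{Theorem:MersenneFactorsYieldkLDC}. The one subtlety worth noting is quantifier order—one might be tempted to let $k$ depend on $\epsilon$, but the hypothesis $P(2^t-1) > 2^{\gamma t}$ with a single fixed $\gamma$ gives a single $k$ via Theorem~\ref{Theorem:MersenneFactorsYieldkLDC}, and the $\epsilon$ dependence is absorbed entirely into the freedom to take $t$ arbitrarily large within $T$.
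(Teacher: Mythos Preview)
Your proposal is correct and matches the paper's approach exactly: the paper states this theorem as an immediate corollary of Theorem~\ref{Theorem:MersenneFactorsYieldkLDC} without even writing out a proof. Your careful handling of the quantifier order (fixing $k=k(\gamma)$ once, then choosing $t\in T$ large enough for each $\epsilon$) is precisely the intended argument.
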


\section{Nice subsets of finite fields yield Mersenne numbers with large prime factors}\label{Sec:SetsToFactors}
\begin{definition}
\label{Def:Nice} We say that a sequence $\left\{S_i \subseteq
\mathbb{F}_{q_i}^*\right\}_{i\geq 1}$ of subsets of finite fields
is $k$-nice if every $S_i$ is $k$ algebraically nice and $t(i)$
combinatorially nice, for some integer valued monotonically
increasing function $t.$
\end{definition}
The core proposition~\ref{Prop:SetsToCodes} asserts that a subset
$S\subseteq \mathbb{F}_q^*$ that is $k$ algebraically nice and $t$
combinatorially nice yields a family of $k$-query locally
decodable codes of length $\exp(n^{1/t}).$ Clearly, to get
$k$-query LDCs of length $\exp(n^\epsilon)$ for some fixed $k$ and
every $\epsilon>0$ via this proposition, one needs to exhibit a
$k$-nice sequence. In this section we show how the existence of a
$k$-nice sequence implies that infinitely many Mersenne numbers
have large prime factors. Our argument proceeds in two steps.
First we show that a $k$-nice sequence yields an infinite sequence
of primes $\left\{p_i\right\}_{i\geq 1},$ where every $C_{p_i}$
contains a $k$-tuple of elements summing to zero. Next we show
that $C_p$ contains a short additive dependence only if $p$ is a
large factor of a Mersenne number.

\subsection{A nice sequence yields infinitely many primes $p$ with short dependencies between $p$-th roots of unity}
We start with some notation. Consider a a finite field
$\mathbb{F}_q=\mathbb{F}_{p^l},$  where $p$ is prime. Fix a basis
$e_1,\ldots,e_l$ of $\mathbb{F}_q$ over $\mathbb{F}_p.$ In what
follows we often write $(\alpha_1,\ldots,\alpha_l)\in
\mathbb{F}_{p}^l$ to denote $\alpha=\sum_{i=1}^l \alpha_i e_i \in
\mathbb{F}_{q}.$ Let $R$ denote the ring
$\mathbb{F}_2[x_1,\ldots,x_l]/(x_1^p-1,\ldots,x_l^p-1).$ Consider
a natural one to one correspondence between subsets $S_1$ of
$\mathbb{F}_q$ and polynomials $\phi_{S_1}(x_1,\ldots,x_l)\in R.$
$$\phi_{S_1}(x_1,\ldots,x_l)=
  \sum\limits_{(\alpha_1,\ldots,\alpha_l) \in S_1}x_1^{\alpha_1}\ldots x_l^{\alpha_l}.$$
It is easy to see that for all sets $S_1\subseteq \mathbb{F}_q$
and all $\alpha,\beta\in\mathbb{F}_q:$
\begin{equation}
\label{Eqn:GeneralShift} \phi_{(\alpha_1,\ldots,\alpha_l)+\beta
S_1}(x_1,\ldots,x_l) =
   x_1^{\alpha_1}\ldots x_l^{\alpha_l} \phi_{\beta S_1}(x_1,\ldots,x_l).
\end{equation}
Let $\Gamma$ be a family of subsets of $\mathbb{F}_q.$ It is
straightforward to verify that a set $S_0\subseteq \mathbb{F}_q$
has even intersections with every element of $\Gamma$ if and only
if $\phi_{S_0}$ belongs to $L^\perp,$ where $L$ is the linear
subspace of $R$ spanned by $\left\{\phi_{S_1}\right\}_{S_1\in
\Gamma}.$ Combining the last observation with
formula~(\ref{Eqn:GeneralShift}) we conclude that a set
$S\subseteq \mathbb{F}_q^*$ is $k$ algebraically nice if and only
if there exists a set $S_1\subseteq \mathbb{F}_q$ of odd size
$k^\prime\leq k$ such that the ideal generated by polynomials
$\left\{\phi_{\beta S_1}\right\}_{\{\beta\in S\}}$ is a proper
ideal of $R.$ Note that polynomials $\{f_1,\ldots,f_h\}\in R$
generate a proper ideal if an only if polynomials
$\{f_1,\ldots,f_h,x_1^p-1,\ldots,x_l^p-1\}$ generate a proper
ideal in $\mathbb{F}_2[x_1,\ldots,x_l].$ Also note that a family
of polynomials generates a proper ideal in
$\mathbb{F}_2[x_1,\ldots,x_l]$ if and only if it generates a
proper ideal in $\overline{\mathbb{F}}_2[x_1,\ldots,x_l].$ Now an
application of Hilbert's Nullstellensatz~\cite[p.~168]{CLS}
implies that a set $S\subseteq \mathbb{F}_q^*$ is $k$
algebraically nice if and only if there is a set $S_1\subseteq
\mathbb{F}_q$ of odd size $k^\prime\leq k$ such that the
polynomials $\left\{\phi_{\beta S_1}\right\}_{\{\beta\in S\}}$ and
$\left\{x_i^p-1\right\}_{1\leq i\leq l}$ have a common root in
$\overline{\mathbb{F}}_2.$
\begin{lemma}
\label{Lemma:kAlgNiceImplieskSum} Let
$\mathbb{F}_q=\mathbb{F}_{p^l},$  where $p$ is prime. Suppose
$\mathbb{F}_q$ contains a nonempty $k$ algebraically nice subset;
then there exist $\zeta_1,\ldots,\zeta_k \in C_p$ such that
$\zeta_1+ \ldots +\zeta_k=0.$
\end{lemma}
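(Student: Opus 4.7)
The approach is essentially to unpack the Nullstellensatz characterization of $k$-algebraic niceness that was established in the paragraph immediately before the lemma. That paragraph shows that a nonempty $k$-algebraically nice $S \subseteq \mathbb{F}_q^*$ gives rise to an odd $k' \leq k$ and a set $S_1 \subseteq \mathbb{F}_q$ of size $k'$ such that the family of polynomials $\{\phi_{\beta S_1}\}_{\beta \in S} \cup \{x_i^p - 1\}_{1 \leq i \leq l}$ in $\mathbb{F}_2[x_1, \ldots, x_l]$ has a common zero in $\overline{\mathbb{F}}_2^{\,l}$. My plan is to extract the desired additive relation directly from that common zero.

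First, I would take such a common zero $(\zeta_1, \ldots, \zeta_l) \in \overline{\mathbb{F}}_2^{\,l}$. The conditions $\zeta_i^p = 1$ immediately put each $\zeta_i$ in the group $C_p$ of $p$-th roots of unity. Next, picking any $\beta \in S$ (possible since $S$ is nonempty), I would expand
\[
0 = \phi_{\beta S_1}(\zeta_1, \ldots, \zeta_l) = \sum_{(\alpha_1, \ldots, \alpha_l) \in \beta S_1} \zeta_1^{\alpha_1} \cdots \zeta_l^{\alpha_l}.
\]
Each monomial on the right is a product of elements of $C_p$ and therefore lies in $C_p$, since $C_p$ is a group. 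This gives a sum of $k'$ (not necessarily distinct) elements of $C_p$ equal to zero.

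Finally, to reach exactly $k$ summands, I would use that $k - k'$ is even (both $k$ and $k'$ are odd) and that we are working in characteristic $2$, so appending $(k-k')/2$ pairs of the form $\eta, \eta$ (with any $\eta \in C_p$, e.g.\ $\eta = 1$) leaves the sum equal to zero. The resulting $k$-tuple $\zeta_1, \ldots, \zeta_k \in C_p$ satisfies $\zeta_1 + \cdots + \zeta_k = 0$, as required.

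The main conceptual obstacle has actually already been handled in the preceding discussion, namely the reduction, via Hilbert's Nullstellensatz, from the original combinatorial statement about even intersections of $S_0$ with translates $\alpha + \beta S_1$ to the existence of a common root of the polynomials $\phi_{\beta S_1}$ together with the $x_i^p - 1$'s. The only thing left to watch is the parity bookkeeping when padding up from $k'$ to $k$, and the trivial observation that monomials in $p$-th roots of unity are themselves $p$-th roots of unity.
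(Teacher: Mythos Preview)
Your proposal is correct and follows essentially the same route as the paper's proof: both invoke the Nullstellensatz characterization from the preceding paragraph to obtain a common zero $(\zeta_1,\ldots,\zeta_l)\in C_p^l$, evaluate $\phi_{\beta S_1}$ at this point for a single $\beta\in S$ to get $k'$ elements of $C_p$ summing to zero, and then pad with pairs of identical roots to reach $k$ summands.
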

\begin{proof}
Assume $S\subseteq \mathbb{F}_q^*$ is nonempty and $k$
algebraically nice. The discussion above implies that there exists
$S_1\subseteq \mathbb{F}_q$ of odd size $k^\prime\leq k$ such that
all polynomials $\left\{\phi_{\beta S_1}\right\}_{\{\beta\in S\}}$
vanish at some $(\zeta_1,\ldots,\zeta_l)\in C_p^{l}.$ Fix an
arbitrary $\beta_0 \in S,$ and note that $C_p$ is closed under
multiplication. Thus,
\begin{equation}
\label{Eqn:SparseSum} \phi_{\beta_0 S_1}(\zeta_1,\ldots,\zeta_l)=0
\end{equation}
yields $k^\prime$ $p$-th roots of unity that add up to zero. It is
readily seen that one can extend~(\ref{Eqn:SparseSum}) (by adding
an appropriate number of pairs of identical roots) to obtain $k$
$p$-th roots of unity that add up to zero for any odd $k\geq
k^\prime.$
\end{proof}
Note that lemma~\ref{Lemma:kAlgNiceImplieskSum} does not suffice
to prove that a $k$-nice sequence $\left\{S_i \subseteq
\mathbb{F}_{q_i}^*\right\}_{i\geq 1}$ yields infinitely many
primes $p$ with short (nontrivial) additive dependencies in $C_p.$
We need to argue that the set $\left\{\mbox{char}
\mathbb{F}_{q_i}\right\}_{i\geq 1}$ can not be finite. To proceed,
we need some more notation. Recall that $q=p^l$ and $p$ is prime.
For $x\in \mathbb{F}_q$ let $Tr(x)=x+ \ldots + x^{p^{l-1}}\in
\mathbb{F}_p$ denote the (absolute) trace of $x.$ For $\gamma \in
\mathbb{F}_q, c\in \mathbb{F}_p^*$ we call the set
$\pi_{\gamma,c}=\left\{ x\in \mathbb{F}_q\ |\ Tr(\gamma x)=c
\right\}$ a {\it proper affine hyperplane} of $\mathbb{F}_q.$
\begin{lemma}
\label{Lemma:kAlgNiceImpliesCover}  Let
$\mathbb{F}_q=\mathbb{F}_{p^l},$  where $p$ is prime. Suppose
$S\subseteq \mathbb{F}_q^*$ is $k$ algebraically nice; then there
exist $h\leq p^k$ proper affine hyperplanes
$\left\{\pi_{\gamma_i,c_i}\right\}_{1\leq i\leq h}$ of
$\mathbb{F}_q$ such that $S\subseteq
\bigcup\limits_{i=1}^{h}\pi_{\gamma_i,c_i}.$
\end{lemma}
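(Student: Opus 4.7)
The plan is to translate the $k$-algebraic niceness of $S$ into a single trace-level equation over $\mathbb{F}_q$ and read off the covering hyperplanes from that equation. Let $S\subseteq \mathbb{F}_q^*$ be $k$ algebraically nice. By Hilbert's Nullstellensatz, as spelled out just before the lemma, there exist a set $S_1\subseteq \mathbb{F}_q$ of odd size $k'\le k$ and a common root $(\zeta_1,\ldots,\zeta_l)\in C_p^{\,l}$ of the system $\{\phi_{\beta S_1}\}_{\beta\in S}$. If all $\zeta_i$ equaled $1$, then $\phi_{\beta S_1}(1,\ldots,1)=|S_1|=k'\equiv 1\pmod 2$ would be nonzero in $\overline{\mathbb{F}}_2$, contradicting vanishing at any $\beta\in S$. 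So pick $\zeta_{i_0}\ne 1$; since $p$ is prime, $\omega:=\zeta_{i_0}$ is a primitive $p$-th root of unity in $\overline{\mathbb{F}}_2$, and we can write $\zeta_i=\omega^{a_i}$ for unique $a_i\in\mathbb{F}_p$. By nondegeneracy of the absolute trace form on $\mathbb{F}_q/\mathbb{F}_p$, there is a unique $\gamma\in\mathbb{F}_q$ with $\sum_i a_i\alpha_i=Tr(\gamma\alpha)$ for every $\alpha=\sum_i\alpha_i e_i\in\mathbb{F}_q$. Substituting into the root condition gives, for every $\beta\in S$,
$$
0\;=\;\phi_{\beta S_1}(\zeta_1,\ldots,\zeta_l)\;=\;\sum_{s\in S_1}\omega^{Tr(\gamma\beta s)}\qquad\text{in }\overline{\mathbb{F}}_2.
$$

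I next verify $\gamma\ne 0$: otherwise every summand equals $\omega^0=1$ and the sum is $k'\equiv 1\ne 0$, a contradiction. Hence $\gamma s\ne 0$ for every $s\in S_1\setminus\{0\}$. Now fix $\beta\in S$. If $Tr(\gamma s\beta)=0$ for all $s\in S_1\setminus\{0\}$, then every summand in the display equals $1$ (the potentially present $s=0$ summand trivially so, as $Tr(0)=0$), yielding total $k'\equiv 1\ne 0$ — again impossible. Therefore some $s^*\in S_1\setminus\{0\}$ satisfies $c:=Tr(\gamma s^*\beta)\in\mathbb{F}_p^*$, so $\beta\in\pi_{\gamma s^*,\,c}$. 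Varying $\beta$,
$$
S\;\subseteq\;\bigcup_{s\in S_1\setminus\{0\}}\ \bigcup_{c\in\mathbb{F}_p^*}\pi_{\gamma s,\,c},
$$
a union of at most $k'(p-1)\le k(p-1)\le p^k$ proper affine hyperplanes, which gives the claim.

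The central step is the passage from the polynomial form of algebraic niceness to the trace form, which collapses the tuple $(\zeta_1,\ldots,\zeta_l)\in C_p^{\,l}$ into a single element $\gamma\in\mathbb{F}_q$; after this identification the rest of the argument is a short parity count in characteristic two, and the crude bound $k(p-1)\le p^k$ is far from tight but suffices. The only minor technicality is allowing $0\in S_1$, but since that summand unconditionally contributes $1$ it merely shifts the relevant parities by one and does not affect the conclusion.
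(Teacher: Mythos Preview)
Your proof is correct and follows the same route as the paper's: both pass through the Nullstellensatz characterization to obtain a common root $(\zeta_1,\ldots,\zeta_l)\in C_p^{\,l}$, rewrite the resulting exponent map as $Tr(\delta\,\cdot\,)$ via nondegeneracy of the trace form, arrive at the identity $\sum_{s\in S_1}\zeta^{Tr(\delta\beta s)}=0$ for every $\beta\in S$, and then use the oddness of $|S_1|$ to rule out the all-zero trace tuple. The only difference is the final bookkeeping: the paper indexes the covering hyperplanes by the set $W\subseteq\mathbb{Z}_p^{k'}$ of $k'$-dependencies among $p$-th roots of unity (picking one nonzero coordinate per $w\in W$) and bounds $|W|\le p^{k'}\le p^k$, whereas you simply take all hyperplanes $\pi_{\gamma s,c}$ with $s\in S_1\setminus\{0\}$ and $c\in\mathbb{F}_p^*$, giving the sharper bound $k'(p-1)\le k(p-1)$. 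Your count is both simpler and strictly better, though either suffices here since the only downstream use of the lemma (in the bound on combinatorial niceness) merely needs a bound depending on $p$ and $k$ alone.
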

\begin{proof} Discussion preceding lemma~\ref{Lemma:kAlgNiceImplieskSum}
implies that there exists a set
$S_1=\{\sigma_1,\ldots,\sigma_{k^\prime}\}\subseteq \mathbb{F}_q$
of odd size $k^\prime\leq k$ such that all polynomials
$\left\{\phi_{\beta S_1}\right\}_{\{\beta\in S\}}$ vanish at some
$(\zeta_1,\ldots,\zeta_l)\in C_p^{l}.$ Let $\zeta$ be a generator
of $C_p.$ For every $1\leq i\leq l$ pick $\omega_i \in
\mathbb{Z}_p$ such that $\zeta_i=\zeta^{\omega_i}.$ For every
$\beta\in S,$ $\phi_{\beta S_1}(\zeta_1,\ldots,\zeta_l)=0$ yields
\begin{equation}
\label{Eqn:ComplexSum0} \sum\limits_{\mu=(\mu_1,\ldots,\mu_l)\in
\beta S_1} \zeta^{\sum_{i=1}^l \mu_i\omega_i}=0.
\end{equation}
Observe that for fixed values $\left\{\omega_i\right\}_{1\leq
i\leq l}\in \mathbb{Z}_p$ the map $D(\mu)=\sum_{i=1}^l
\mu_i\omega_i$ is a linear map from $\mathbb{F}_q$ to
$\mathbb{F}_p.$ It is not hard to prove that every such map can be
expressed as $D(\mu)=Tr(\delta\mu)$ for an appropriate choice of
$\delta\in \mathbb{F}_q.$ Therefore we can
rewrite~(\ref{Eqn:ComplexSum0}) as
\begin{equation}
\label{Eqn:ComplexSum1} \sum\limits_{\mu \in \beta S_1}
\zeta^{Tr(\delta \mu)} = \sum\limits_{\sigma \in S_1}
\zeta^{Tr(\delta \beta \sigma)} = 0.
\end{equation}
Let $W=\left\{(w_1,\ldots,w_{k^\prime}) \in
\mathbb{Z}_p^{k^\prime}\ \left|\right. \ \zeta^{w_1}+\ldots
+\zeta^{w_{k^\prime}}=0 \right\}$ denote the set of exponents of
$k^\prime$-dependencies between powers of $\zeta.$ Clearly,
$\left|W \right|\leq p^k.$ Identity~(\ref{Eqn:ComplexSum1})
implies that every $\beta\in S$ satisfies
\begin{equation}
\label{Eqn:ComplexSystem} \left\{
\begin{array}{lcl}
Tr((\delta\sigma_1)\beta)         & = & w_1,          \\
\vdots                            &   &               \\
Tr((\delta\sigma_{k^\prime})\beta)& = & w_{k^\prime}; \\
\end{array}
\right.
\end{equation}
for an appropriate choice of $(w_1,\ldots,w_{k^\prime})\in W.$
Note that the all-zeros vector does not lie in $W$ since
$k^\prime$ is odd. Therefore at least one of the identities
in~(\ref{Eqn:ComplexSystem}) has a non-zero right-hand side, and
defines a proper affine hyperplane of $\mathbb{F}_q$. Collecting
one such hyperplane for every element of $W$ we get a family of
$|W|$ proper affine hyperplanes containing every element of $S.$
\end{proof}
Lemma~\ref{Lemma:kAlgNiceImpliesCover} gives us some insight into
the structure of algebraically nice subsets of $\mathbb{F}_q.$ Our
next goal is to develop an insight into the structure of
combinatorially nice subsets. We start by reviewing some relations
between tensor and dot products of vectors. For vectors $u\in
\mathbb{F}_q^m$ and $v\in \mathbb{F}_q^n$ let $u\otimes v\in
\mathbb{F}_q^{mn}$ denote the tensor product of $u$ and $v.$
Coordinates of $u\otimes v$ are labelled by all possible elements
of $[m]\times [n]$ and $(u\otimes v)_{i,j}=u_i v_j.$ Also, let
$u^{\otimes l}$ denote the $l$-the tensor power of $u$ and $u\circ
v$ denote the concatenation of $u$ and $v.$ The following identity
is standard. For any $u,x\in \mathbb{F}_q^m$ and $v,y\in
\mathbb{F}_q^n:$
\begin{equation}
\label{Eqn:2TensorDot} \left(u\otimes v, x\otimes y\right)=
\sum\limits_{i\in [m], j\in [n]}u_i v_j x_i y_j =
\left(\sum\limits_{i\in [m]}u_i x_i\right) \left(\sum\limits_{j\in
[n]}v_j y_j\right) = (u,x)(v,y).
\end{equation}
In what follows we need a generalization of
identity~(\ref{Eqn:2TensorDot}). Let $f(x_1,\ldots,x_h)=\sum_i c_i
x_1^{\alpha_1^i}\ldots x_h^{\alpha_h^i}$ be a polynomial in
$\mathbb{F}_q[x_1,\ldots,x_h].$ Given $f$ we define ${\bar f} \in
\mathbb{F}_q[x_1,\ldots,x_h]$ by ${\bar f} = \sum_i
x_1^{\alpha_1^i}\ldots x_h^{\alpha_h^i},$ i.e., we simply set all
nonzero coefficients of $f$ to $1.$ For vectors $u_1,\ldots,u_h$
in $\mathbb{F}_q^m$ define
\begin{equation}
\label{Eqn:PolyVector} f(u_1,\ldots,u_h)=\circ_{i}\ c_i
u_1^{\otimes \alpha_1^i}\otimes \ldots \otimes u_h^{\otimes
\alpha_h^i}.
\end{equation}
Note that to obtain $f(u_1,\ldots,u_h)$ we
replaced products in $f$ by tensor products and addition by
concatenation. Clearly, $f(u_1,\ldots,u_h)$ is a vector whose
length may be larger than $m.$
\begin{claim}
For every $f\in \mathbb{F}_q[x_1,\ldots,x_h]$ and
$u_1,\ldots,u_h,v_1,\ldots,v_h \in \mathbb{F}_q^m:$
\begin{equation}
\label{Eqn:PolyTensorDot} \left(f(u_1,\ldots,u_h),{\bar
f}(v_1,\ldots,v_h)\right) = f((u_1, v_1), \ldots, (u_h, v_h)).
\end{equation}
\end{claim}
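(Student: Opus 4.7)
My plan is to unravel both sides using linearity of the inner product over concatenation, then apply a multi-factor generalization of~(\ref{Eqn:2TensorDot}) term by term.

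First I would observe that for vectors written as a concatenation $a = \circ_i a^{(i)}$ and $b = \circ_i b^{(i)}$ of blocks of equal lengths, $(a,b) = \sum_i (a^{(i)}, b^{(i)})$. Applying this to the definition~(\ref{Eqn:PolyVector}) with the $c_i$'s living in the first block, I get
\begin{equation*}
\left(f(u_1,\ldots,u_h),\bar f(v_1,\ldots,v_h)\right) = \sum_i c_i\,\left(u_1^{\otimes\alpha_1^i}\otimes\cdots\otimes u_h^{\otimes\alpha_h^i},\; v_1^{\otimes\alpha_1^i}\otimes\cdots\otimes v_h^{\otimes\alpha_h^i}\right).
\end{equation*}

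Next I would establish the needed generalization of~(\ref{Eqn:2TensorDot}): for any vectors $u_1,\ldots,u_h,v_1,\ldots,v_h\in\mathbb{F}_q^m$ and any nonnegative integers $\beta_1,\ldots,\beta_h$,
\begin{equation*}
\left(u_1^{\otimes\beta_1}\otimes\cdots\otimes u_h^{\otimes\beta_h},\; v_1^{\otimes\beta_1}\otimes\cdots\otimes v_h^{\otimes\beta_h}\right) = (u_1,v_1)^{\beta_1}\cdots(u_h,v_h)^{\beta_h}.
\end{equation*}
This is a straightforward induction on the total number of tensor factors $\beta_1+\cdots+\beta_h$: each inductive step peels off one tensor factor and applies~(\ref{Eqn:2TensorDot}) once. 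I expect this to be the only nontrivial step, and it is essentially a bookkeeping calculation.

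Finally, plugging this identity into the previous display yields
\begin{equation*}
\left(f(u_1,\ldots,u_h),\bar f(v_1,\ldots,v_h)\right) = \sum_i c_i\,(u_1,v_1)^{\alpha_1^i}\cdots(u_h,v_h)^{\alpha_h^i},
\end{equation*}
which, by the definition of $f$, is exactly $f((u_1,v_1),\ldots,(u_h,v_h))$. The main (minor) obstacle is just justifying that the blocks in $f(u_1,\ldots,u_h)$ and $\bar f(v_1,\ldots,v_h)$ line up in the same order so that the coordinate-wise inner product really decomposes as indicated; this follows from the fact that both polynomial evaluations use the same enumeration of monomials of $f$.
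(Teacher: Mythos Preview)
Your proposal is correct and follows essentially the same approach as the paper: both arguments first reduce to monomials via the block-decomposition of the inner product over concatenation, and then handle each monomial by induction on the total degree $\sum_j \alpha_j^i$ using identity~(\ref{Eqn:2TensorDot}). Your writeup is slightly more explicit about the intermediate tensor identity, but the logic is identical.
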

\begin{proof}
Let ${\bf u}=(u_1,\ldots,u_h)$ and ${\bf v}=(v_1,\ldots,v_h).$
Observe that if~(\ref{Eqn:PolyTensorDot}) holds for polynomials
$f_1$ and $f_2$ defined over disjoint sets of monomials then it
also holds for $f=f_1+f_2:$
$$
\begin{array}{c}
\left(f({\bf u}),{\bar f}({\bf v})\right) = \left((f_1+f_2)({\bf
u}),({\bar f}_1+{\bar f_2})({\bf v})\right) = \left(f_1({\bf
u})\circ f_2({\bf u}),{\bar f}_1({\bf v})\circ {\bar f}_2({\bf
v})\right) = \vspace{0.15cm} \\
f_1\left((u_1, v_1), \ldots, (u_h, v_h)\right) + f_2\left((u_1,
v_1), \ldots,
(u_h, v_h)\right) = f\left((u_1, v_1), \ldots, (u_h, v_h)\right).\\
\end{array}
$$
Therefore it suffices to prove~(\ref{Eqn:PolyTensorDot}) for
monomials $f = c x_1^{\alpha_1}\ldots x_h^{\alpha_h}.$ It remains
to notice identity~(\ref{Eqn:PolyTensorDot}) for monomials $f = c
x_1^{\alpha_1}\ldots x_h^{\alpha_h}$ follows immediately from
formula~(\ref{Eqn:2TensorDot}) using induction on
$\sum_{i=1}^h\alpha_i.$
\end{proof}
The next lemma bounds combinatorial niceness of certain subsets of
$\mathbb{F}_q^*.$
\begin{lemma}
\label{Lemma:CoverImliesCombinatorialBound} Let
$\mathbb{F}_q=\mathbb{F}_{p^l},$  where $p$ is prime. Let
$S\subseteq \mathbb{F}_q^*.$  Suppose there exist $h$ proper
affine hyperplanes $\left\{\pi_{\gamma_r,c_r}\right\}_{1\leq r\leq
h}$ of $\mathbb{F}_q$ such that $S\subseteq
\bigcup\limits_{r=1}^{h}\pi_{\gamma_r,c_r};$ then $S$ is at most
$h(p-1)$ combinatorially nice.
\end{lemma}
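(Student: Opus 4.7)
The plan is to deploy a polynomial-rank argument over $\mathbb{F}_p$. Suppose $S$ is $t$ combinatorially nice; the aim is to force $t \le h(p-1)$. Fix $m$ and the corresponding vectors $u_1, \ldots, u_n, v_1, \ldots, v_n \in \mathbb{F}_q^m$ with $n = \lfloor c m^t \rfloor$, $(u_i, v_i) = 0$, and $(u_j, v_i) \in S$ for $i \ne j$. Since every proper hyperplane $\pi_{\gamma_r, c_r}$ has $c_r \in \mathbb{F}_p^*$, the origin is not on any of them. Define the indicator polynomial
\[
F(\beta) \;=\; 1 - \prod_{r=1}^{h}\bigl(Tr(\gamma_r \beta) - c_r\bigr)^{p-1}, \qquad F : \mathbb{F}_q \to \mathbb{F}_p.
\]
Since $Tr(\gamma_r \beta) - c_r$ lies in $\mathbb{F}_p$, Fermat's little theorem makes each factor equal to $1$ when $\beta \notin \pi_{\gamma_r, c_r}$ and $0$ when $\beta \in \pi_{\gamma_r, c_r}$. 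Therefore $F(0) = 0$ while $F(s) = 1$ for every $s \in S \subseteq \bigcup_r \pi_{\gamma_r, c_r}$.

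Next, form the matrix $M \in \mathbb{F}_p^{n \times n}$ with $M_{ij} = F((u_j, v_i))$. By the two niceness conditions $M = J - I$, whose rank over $\mathbb{F}_p$ is at least $n-1$ (its eigenvalues are $n-1$ and $-1$). I will bound this rank from above via the polynomial complexity of $F((u_j, v_i))$. Fix a basis $e_1, \ldots, e_l$ of $\mathbb{F}_q$ over $\mathbb{F}_p$ and view $u_j$ and $v_i$ as vectors of $ml$ coordinates in $\mathbb{F}_p$. Multiplication in $\mathbb{F}_q$ is $\mathbb{F}_p$-bilinear and $Tr$ is $\mathbb{F}_p$-linear, so $Tr(\gamma_r (u_j, v_i))$ is $\mathbb{F}_p$-bilinear in these coordinates. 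Consequently $(Tr(\gamma_r (u_j, v_i)) - c_r)^{p-1}$ has degree at most $p-1$ in the $u_j$-coordinates and $p-1$ in the $v_i$-coordinates; taking the product of $h$ such factors shows that $F((u_j, v_i))$ has bidegree at most $(h(p-1), h(p-1))$.

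Expanding this polynomial and grouping by $u_j$-monomials yields a factorization $M = U V^T$ in which $U$ has at most $\binom{ml + h(p-1)}{h(p-1)}$ columns, one per monomial of total degree $\le h(p-1)$ in $ml$ variables. This quantity is $O(m^{h(p-1)})$ as $m \to \infty$ with $p, l, h$ fixed, so $n - 1 \le \mathrm{rank}(M) = O(m^{h(p-1)})$. Because $n = \lfloor c m^t \rfloor$ holds for all $m$, this forces $t \le h(p-1)$, which is the desired upper bound on combinatorial niceness. The main technical point to get right is the bidegree bookkeeping in the previous paragraph: one must check that the composition of $\mathbb{F}_q$-multiplication, the trace, the $(p-1)$-th power, and the product over $r$ really does preserve the $(h(p-1), h(p-1))$ bound on the $\mathbb{F}_p$-bidegree of $F((u_j, v_i))$; once that is in place, the dimension-of-low-degree-polynomials rank bound closes out the argument.
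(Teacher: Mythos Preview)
Your argument is correct and rests on the same polynomial-method idea as the paper's: build a degree-$h(p-1)$ polynomial over $\mathbb{F}_p$ that vanishes at $0$ and equals $1$ on $S$, then use it to bound a rank/dimension by $O(m^{h(p-1)})$, forcing $t\le h(p-1)$. The packaging, however, differs. The paper first manufactures auxiliary vectors $u_i^{(r)}\in\mathbb{F}_q^{ml}$ via Frobenius so that $(u_j^{(r)},v_i^{(r)})=Tr(\gamma_r(u_j,v_i))$, then feeds these into a homogeneous norm-form polynomial $g^{p-1}$ on $\mathbb{F}_p^h$ (nonzero off the origin) through an explicit tensor-product construction, obtaining vectors $u_i',v_i'$ of length $m'=O(m^{h(p-1)})$ with $(u_i',v_i')=-1$ and $(u_j',v_i')=0$ for $i\ne j$; the conclusion then follows from the trivial bound $n\le m'$. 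Your indicator $F(\beta)=1-\prod_r(Tr(\gamma_r\beta)-c_r)^{p-1}$ together with the direct rank estimate on $J-I$ reaches the same endpoint more economically: it avoids both the norm form (your product of shifted $(p-1)$-th powers already does the job) and the tensor formalism (your $M=UV^T$ factorization is exactly what the paper's identity $(f(u),\bar f(v))=f((u,v))$ encodes). The trade-off is only expository: the paper's route keeps everything phrased as inner products of explicit vectors over $\mathbb{F}_q$, while yours works with $\mathbb{F}_p$-coordinates and matrix rank throughout.
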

\begin{proof} Assume $S$ is $t$ combinatorially nice. This implies
that for some $c>0$ and every $m$ there exist two $n=\lfloor
cm^t\rfloor$-sized collections of vectors $\{u_i\}_{i\in [n]}$ and
$\{v_i\}_{i\in [n]}$ in $\mathbb{F}_q^m,$ such that:
\begin{itemize}
\item For all $i\in [n],$ $(u_i,v_i)=0;$

\item For all $i,j\in [n]$ such that $i\ne j,$ $(u_j,v_i)\in S.$
\end{itemize}
For a vector $u\in \mathbb{F}_q^m$ and integer $e$ let $u^e$
denote a vector resulting from raising every coordinate of $u$ to
the power~$e.$ For every $i\in [n]$ and $r\in [h]$ define vectors
$u_i^{(r)}$ and $v_i^{(r)}$ in $\mathbb{F}_q^{ml}$ by
\begin{equation}
\label{Eqn:Vectors0} u_i^{(r)}=(\gamma_r u_i)\circ (\gamma_r
u_i)^{p}\circ \ldots \circ (\gamma_r u_i)^{p^{l-1}}\quad
\mbox{and} \quad v_i^{(r)}=v_i\circ v_i^{p}\circ \ldots \circ
v_i^{p^{l-1}}.
\end{equation}
Note that for every $r_1,r_2 \in [h],\ v_i^{(r_1)}=v_i^{(r_2)}.$
It is straightforward to verify that for every $i,j\in [n]$ and
$r\in [h]:$
\begin{equation}
\label{Eqn:Vectors1}
\left(u_j^{(r)},v_i^{(r)}\right)=Tr(\gamma_r(u_j,v_i)).
\end{equation}
Combining~(\ref{Eqn:Vectors1}) with the fact that $S$ is covered
by proper affine hyperplanes $\pi_{\gamma_i,c_i}$ we conclude that
\begin{itemize}
\item For all $i\in [n]$ and $r\in [h],$
$\left(u_i^{(r)},v_i^{(r)}\right)=0;$

\item For all $i,j\in [n]$ such that $i\ne j,$ there exists $r\in
[h]$ such that $\left(u_j^{(r)},v_i^{(r)}\right)\in
\mathbb{F}_p^*.$
\end{itemize}
Pick $g(x_1,\ldots,x_h)\in \mathbb{F}_p[x_1,\ldots,x_h]$ to be a
homogeneous degree $h$ polynomial such that for ${\bf
a}=(a_1,\ldots,a_h)\in \mathbb{F}_p^h:$ $g({\bf a})=0$ if and only
if ${\bf a}$ is the all-zeros vector. The existence of such a
polynomial $g$ follows from~\cite[Example 6.7]{LN}. Set
$f=g^{p-1}.$ Note that for ${\bf a}\in \mathbb{F}_p^h:\ f(a)=0$ if
${\bf a}$ is the all-zeros vector, and $f(a)=1$ otherwise. For all
$i\in [n]$ define
\begin{equation}
\label{Eqn:Vectors3} u_i^\prime =
f\left(u_i^{(1)},\ldots,u_i^{(h)}\right)\circ (1) \quad
\mbox{and}\quad v_i^\prime = {\bar
f}\left(v_i^{(1)},\ldots,v_i^{(h)}\right)\circ (-1).
\end{equation}
Note that $f$ and ${\bar f}$ are homogeneous degree $(p-1)h$
polynomials in $h$ variables. Therefore~(\ref{Eqn:PolyVector})
implies that for all $i$ vectors $u_i^\prime$ and $v_i^\prime$
have length $m^\prime\leq h^{(p-1)h}(ml)^{(p-1)h}.$ Combining
identities~(\ref{Eqn:Vectors3}) and~(\ref{Eqn:PolyTensorDot}) and
using the properties of dot products between vectors
$\left\{u_i^{(r)}\right\}$ and $\left\{v_i^{(r)}\right\}$
discussed above we conclude that for every $m$ there exist two
$n=\lfloor cm^t\rfloor$-sized collections of vectors
$\{u_i^\prime\}_{i\in [n]}$ and $\{v_i^\prime\}_{i\in [n]}$ in
$\mathbb{F}_q^{m^\prime},$ such that:
\begin{itemize}
\item For all $i\in [n],$ $(u_i^\prime,v_i^\prime)=-1;$

\item For all $i,j\in [n]$ such that $i\ne j,$ $(u_j,v_i)=0.$
\end{itemize}
It remains to notice that a family of vectors with such properties
exists only if $n\leq m^\prime,$ i.e., $\lfloor cm^t\rfloor\leq
h^{(p-1)h}(ml)^{(p-1)h}.$ Given that we can pick $m$ to be
arbitrarily large, this implies that $t\leq (p-1)h.$
\end{proof}
The next lemma presents the main result of this section.
\begin{lemma}
\label{Lemma:NiceSequenceYieldsShortSums} Let $k$ be an odd
integer. Suppose there exists a $k$-nice sequence; then for
infinitely many primes $p$ some $k$ of elements of $C_p$ add up to
zero.
\end{lemma}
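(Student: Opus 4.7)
The plan is to chain together the three lemmas about algebraically nice sets that precede this statement. Let $\{S_i \subseteq \mathbb{F}_{q_i}^*\}_{i\geq 1}$ be the given $k$-nice sequence, and write $q_i = p_i^{l_i}$ where $p_i$ is prime. By Lemma~\ref{Lemma:kAlgNiceImplieskSum}, for each $i$ the set $C_{p_i}$ contains a $k$-tuple of elements summing to zero (the lemma produces an odd $k'$-tuple with $k' \leq k$; since $k$ is odd we can pad with $(k-k')/2$ pairs of identical roots as in the final step of that lemma's proof). So it suffices to show that the set of primes $\{p_i\}_{i\geq 1}$ is infinite.

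To prove this, I would upper-bound $t(i)$ in terms of $p_i$. Applying Lemma~\ref{Lemma:kAlgNiceImpliesCover} to the $k$-algebraically nice set $S_i \subseteq \mathbb{F}_{p_i^{l_i}}^*$ yields a cover of $S_i$ by at most $p_i^k$ proper affine hyperplanes of $\mathbb{F}_{q_i}$. Feeding this cover into Lemma~\ref{Lemma:CoverImliesCombinatorialBound} then shows that $S_i$ is at most $p_i^k(p_i - 1) < p_i^{k+1}$ combinatorially nice. Combined with the hypothesis that $S_i$ is $t(i)$ combinatorially nice, this gives the bound
\[
t(i) \leq p_i^{k+1}.
\]

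Now I would finish by exploiting the fact that $t$ is monotonically increasing and integer valued, so $t(i) \to \infty$ as $i \to \infty$. The displayed inequality then forces $p_i \to \infty$, so in particular infinitely many distinct primes occur among the $p_i$. For each such prime $p$, the corresponding $C_p$ contains a $k$-tuple of elements summing to zero, as noted above. There is no real obstacle here: the three preceding lemmas have done all the work, and the argument is purely a matter of assembling the estimates in the right order. The one subtlety to be careful about is the step that turns a $k'$-dependence into a $k$-dependence for arbitrary odd $k \geq k'$, which is already handled inside the proof of Lemma~\ref{Lemma:kAlgNiceImplieskSum}.
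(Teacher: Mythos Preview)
Your proposal is correct and follows essentially the same approach as the paper: both arguments chain Lemmas~\ref{Lemma:kAlgNiceImplieskSum}, \ref{Lemma:kAlgNiceImpliesCover}, and~\ref{Lemma:CoverImliesCombinatorialBound} to bound $t(i)$ by $(p_i-1)p_i^k$, then use the unbounded growth of $t(i)$ to force infinitely many distinct characteristics $p_i$, each of which yields a $k$-dependence in $C_{p_i}$ via Lemma~\ref{Lemma:kAlgNiceImplieskSum}. The only cosmetic difference is that the paper phrases the finiteness step as ``for each fixed $p$, only finitely many $\mathbb{F}_{q_i}$ have characteristic $p$,'' whereas you phrase it as ``$t(i)\to\infty$ forces $p_i\to\infty$''; these are equivalent.
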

\begin{proof} Assume $\left\{S_i \subseteq
\mathbb{F}_{q_i}^*\right\}_{i\geq 1}$ is $k$-nice. Let $p$ be a
fixed prime. Combining lemmas~\ref{Lemma:kAlgNiceImpliesCover}
and~\ref{Lemma:CoverImliesCombinatorialBound} we conclude that
every $k$ algebraically nice subset $S\subseteq
\mathbb{F}_{p^l}^*$ is at most $(p-1)p^k$ combinatorially nice.
Note that our bound on combinatorial niceness is independent of
$l.$ Therefore there are only finitely many extensions of the
field $\mathbb{F}_p$ in the sequence
$\left\{\mathbb{F}_{q_i}\right\}_{i\geq 1},$ and the set
$\mathbb{P}=\left\{\mbox{char} \mathbb{F}_{q_i}\right\}_{i\geq 1}$
is infinite. It remains to notice that according to
lemma~\ref{Lemma:kAlgNiceImplieskSum} for every $p\in \mathbb{P}$
there exist $k$ elements of $C_p$ that add up to zero.
\end{proof}

In what follows we present necessary conditions for the existence
of $k$-tuples of $p$-th roots of unity in
$\overline{\mathbb{F}}_2$ that sum to zero. We treat the $k=3$
case separately since in that case we can use a specialized
argument to derive a slightly stronger conclusion.

\subsection{A necessary condition for the existence of $k$ $p$-th roots of unity summing to zero}
\begin{lemma}
\label{Lemma:kSums} Let $k\geq 3$ be odd and $p$ be a prime.
Suppose there exist $\zeta_1,\ldots,\zeta_k \in C_p$ such that
$\sum_{i=1}^k \zeta_i = 0;$ then
\begin{equation}
\label{Eqn:SumImpliesOrder} \mathrm{ord}_2(p) \leq 2p^{1-1/(k-1)}.
\end{equation}
\end{lemma}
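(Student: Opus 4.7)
The plan is to translate the vanishing sum into the existence of a sparse polynomial in $\mathbb{F}_2[x]$ with many roots that are primitive $p$-th roots of unity, and then bound $t$ in terms of the sparsity. First, using characteristic-$2$ cancellation, I would pair off equal summands in $\sum_{i=1}^k \zeta_i = 0$ to reduce to a relation $\sum_{i=1}^{k'} \zeta_i' = 0$ with $k' \leq k$ odd and the $\zeta_i' \in C_p$ pairwise distinct. Dividing through by $\zeta_{k'}'$ yields $1 + \eta_1 + \cdots + \eta_m = 0$, where $m = k' - 1$ is even and at most $k - 1$, and the $\eta_i$ are distinct non-identity $p$-th roots of unity. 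Writing $\eta_i = g^{a_i}$ for a fixed generator $g$ of $C_p$ and distinct $a_i \in \{1, \ldots, p-1\}$, this becomes the statement that the sparse polynomial $R(x) = 1 + x^{a_1} + \cdots + x^{a_m} \in \mathbb{F}_2[x]$ satisfies $R(g) = 0$.

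Because $[\mathbb{F}_2(g) : \mathbb{F}_2] = t$, the minimal polynomial $m_g$ of $g$ has degree $t$ and divides $R$, so $R$ vanishes on the entire Frobenius orbit $T = \{g, g^2, g^4, \ldots, g^{2^{t-1}}\}$, which has size exactly $t$. Thus $R$ is an $\mathbb{F}_2$-polynomial of weight at most $k$ with at least $t$ roots that are primitive $p$-th roots of unity; equivalently, $R$ is a weight-$\leq k$ codeword in the cyclic code of length $p$ over $\mathbb{F}_2$ whose generator polynomial is $m_g$.

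The remaining step is to deduce $t \leq 2p^{1-1/(k-1)}$ from the existence of such a codeword. I would approach this algebraically by lifting: writing $d = (2^t - 1)/p$ and substituting $\zeta_i = y_i^d$, the original relation gives an $\mathbb{F}_{2^t}$-rational point on the Fermat-type hypersurface $y_1^d + \cdots + y_k^d = 0$ in $\mathbb{P}^{k-1}$, and the Frobenius orbit of the relation supplies $t$ distinct such points (which I would verify are genuinely distinct by using that each $g^{a_i}$ has $\mathbb{F}_2$-degree exactly $t$). A careful count of the $\mathbb{F}_{2^t}$-rational points of this hypersurface---or equivalently of the tuples $(\eta_1, \ldots, \eta_m) \in (C_p \setminus \{1\})^m$ with $\sum \eta_i = 1$---should, when combined with the Frobenius-orbit lower bound, force the claimed inequality.

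The hard part will be extracting the precise exponent $1 - 1/(k-1)$. The standard Weil character-sum estimate $|\sum_{\zeta \in C_p} (-1)^{\mathrm{Tr}(a\zeta)}| \leq \sqrt{2^t}$ combined with a Fourier expansion on $\mathbb{F}_{2^t}$ yields bounds of the wrong shape (either a much stronger conditional bound of logarithmic type or a too-weak unconditional one). I expect the proof to require a more direct algebraic manipulation: for instance, considering the $t \times k'$ matrix $M_{j,i} = g^{2^j a_i}$ whose rows all sum to zero, so that every $k' \times k'$ minor---a generalized Vandermonde determinant factoring as a classical Vandermonde times a Schur polynomial in the $t$ conjugates of $g$---must vanish, and then extracting from the forced vanishing of these Schur polynomials on many $k'$-tuples of points of $T$ the desired bound on $t = |T|$ in terms of $p$ and $k$.
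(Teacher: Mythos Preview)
Your initial reduction is fine and matches the paper's first moves (normalize $\zeta_k=1$, write $\zeta_i=\zeta^{w_i}$ for a generator $\zeta$, and record that no nonzero $\mathbb{F}_2$-polynomial of degree $<t$ can vanish at a primitive $p$-th root of unity). But from that point on you have not actually proved the inequality, and the routes you sketch do not lead to it. The Frobenius-orbit observation $R(g^{2^j})=0$ only tells you that your sparse polynomial has degree $\geq t$, which gives the useless $t\leq p-1$; counting $\mathbb{F}_{2^t}$-points on the Fermat hypersurface $\sum y_i^{(2^t-1)/p}=0$ goes the wrong way (that count is used in the paper to \emph{produce} short relations, not to constrain $t$ given one); and the vanishing of generalized Vandermonde/Schur minors yields identities in $\mathbb{F}_{2^t}$, not a count of the form $(t/2)^{k-1}\leq p^{k-2}$. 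You yourself flag the exponent $1-1/(k-1)$ as the hard part and do not supply an argument that delivers it.

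The paper's proof is a short pigeonhole and uses neither Weil sums nor any geometry. With $h=\lfloor (t-1)/2\rfloor$, consider the map
\[
(m,i_1,\ldots,i_{k-1})\ \longmapsto\ (mw_1+i_1,\ldots,mw_{k-1}+i_{k-1})\in\mathbb{Z}_p^{k-1},
\qquad m\in\mathbb{Z}_p,\ i_\ell\in[0,h].
\]
If two domain tuples collide, one checks $m\neq m'$; setting $n=m-m'$ and $j_\ell=i_\ell'-i_\ell\in[-h,h]$ gives $nw_\ell=j_\ell$, and then $P(z)=z^h+\sum_\ell z^{h+j_\ell}\in\mathbb{F}_2[z]$ has degree $\leq 2h\leq t-1$, satisfies $P(1)=k\bmod 2=1$ (so $P\neq 0$), and satisfies $P(\zeta^{n^{-1}})=0$ because $\sum_\ell\zeta^{w_\ell}+1=0$. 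This contradicts minimality. Hence the map is injective, so $p(h+1)^{k-1}\leq p^{k-1}$, i.e.\ $(t/2)^{k-1}\leq p^{k-2}$, which is exactly $t\leq 2p^{1-1/(k-1)}$. The idea you are missing is to use the full multiplicative $\mathbb{Z}_p$-action on the exponent tuple together with small additive shifts, rather than only the Frobenius orbit.
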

\begin{proof}
Let $t=\mathrm{ord}_2(p).$ Note that $C_p\subseteq
\mathbb{F}_{2^t}.$ Note also that all elements of $C_p$ other than
the multiplicative identity are proper elements of
$\mathbb{F}_{2^t}.$ Therefore for every $\zeta\in C_p$ where
$\zeta\ne 1$ and every $f(x)\in \mathbb{F}_2[x]$ such that $\deg
f\leq t-1$ we have: $f(\zeta)\ne 0.$

By multiplying $\sum_{i=1}^k \zeta_i=0$ through by
$\zeta_{k}^{-1},$ we may reduce to the case $\zeta_{k}=1.$ Let
$\zeta$ be the generator of $C_p.$ For every $i\in [k-1]$ pick
$w_i \in \mathbb{Z}_p$ such that $\zeta_i=\zeta^{w_i}.$  We now
have $\sum_{i=1}^{k-1}\zeta^{w_i}+1=0.$ Set $h=\lfloor (t-1)/2
\rfloor.$ Consider the $(k-1)$-tuples:
\begin{equation}
\label{Eqn:Tuples} (mw_1+i_1,\ldots,mw_{k-1}+i_{k-1})\in
\mathbb{Z}_p^{k-1},\ \mbox{ for }\ m\in \mathbb{Z}_p \ \mbox{ and
}\ i_1,\ldots,i_{k-1}\in [0,h].
\end{equation}
Suppose two of these coincide, say
$$
(mw_1+i_1,\ldots,mw_{k-1}+i_{k-1}) = (m^\prime
w_1+i_1^\prime,\ldots,m^\prime w_{k-1}+i_{k-1}^\prime),
$$
with $(m,i_1,\ldots,i_{k-1})\ne
(m^\prime,i^\prime_1,\ldots,i^\prime_{k-1}).$ Set $n=m-m^\prime$
and $j_l=i_l^\prime-i_l$ for $l\in [k-1].$ We now have
$$
(n w_1,\ldots,n w_{k-1})=(j_1,\ldots,j_l)
$$
with $-h \leq j_1,\ldots,j_{k-1}\leq h.$ Observe that $n\ne 0,$
and thus it has a multiplicative inverse $g\in \mathbb{Z}_p.$
Consider a polynomial
$$
P(z)=z^{j_1+h}+ \ldots +z^{j_{k-1}+h}+z^h \in \mathbb{F}_2[z].
$$
Note that $\deg P\leq 2h\leq t-1.$ Note also that $P(1)=1$ and
$P(\zeta^g)=0.$ The latter identity contradicts the fact that
$\zeta^g$ is a proper element of $\mathbb{F}_{2^t}.$ This
contradiction implies that all $(k-1)$-tuples
in~(\ref{Eqn:Tuples}) are distinct. This yields
$$
p^{k-1} \geq p \left( \frac{t}{2} \right)^{k-1},
$$
which is equivalent to~(\ref{Eqn:SumImpliesOrder}).
\end{proof}

\subsection{A necessary condition for the existence of three $p$-th roots of unity summing to zero}
In this section we slightly strengthen lemma~\ref{Lemma:kSums} in
the special case when $k=3.$ Our argument is loosely inspired by
the Agrawal-Kayal-Saxena deterministic primality test~\cite{AKS}.
\begin{lemma}
\label{Lemma:3Sums} Let $p$ be a prime. Suppose there exist
$\zeta_1,\zeta_2,\zeta_3 \in C_p$ that sum up to zero; then
\begin{equation}
\label{Eqn:3SumImpliesOrder} \mathrm{ord}_2(p) \leq
\left((4/3)p\right)^{1/2}.
\end{equation}
\end{lemma}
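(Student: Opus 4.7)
The plan is to follow an AKS-style polynomial identity argument, as hinted in the lemma statement. First normalize the dependence: dividing $\zeta_1 + \zeta_2 + \zeta_3 = 0$ through by $\zeta_3$ reduces to the case $\zeta_3 = 1$, so $\zeta_1 + \zeta_2 = 1$ in characteristic $2$. Neither $\zeta_1$ nor $\zeta_2$ can equal $1$ (else the other would be $0 \notin C_p$), so both are primitive $p$-th roots of unity; set $\zeta = \zeta_1$ and write $1 + \zeta = \zeta_2 = \zeta^s$ for the unique $s \in \{1, \ldots, p-1\}$. Since $\zeta \in \mathbb{F}_{2^t}$ has minimal polynomial of degree exactly $t$ over $\mathbb{F}_2$, any nonzero polynomial in $\mathbb{F}_2[x]$ of degree less than $t$ is nonvanishing at $\zeta$; this is the engine of the argument.

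Next, I would introduce the set
\[
T = \{(a, b) \in \mathbb{Z}^2 : 0 \leq a, b \leq t-1,\ m \leq a + b \leq m + t - 1\}
\]
where $m = \lfloor t/2 \rfloor$, together with the map $\phi : T \to C_p$ sending $(a, b) \mapsto \zeta^a(1+\zeta)^b = \zeta^{a + sb \bmod p}$. The central claim is that $\phi$ is injective, which will force $|T| \leq p$.

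To verify injectivity, suppose $\phi(a, b) = \phi(a', b')$ for distinct $(a, b), (a', b') \in T$, and set $c = a - a'$, $d = b' - b$ with (WLOG) $d \geq 0$. If $d = 0$ then $a \equiv a' \pmod p$, which together with $|a - a'| < t \leq p$ forces $(a, b) = (a', b')$, a contradiction. So $d \geq 1$ and $\zeta^c = (1+\zeta)^d$ in $\mathbb{F}_{2^t}$. If $c \geq 0$, the polynomial $P(x) = x^c + (1+x)^d$ vanishes at $\zeta$; a short check shows $P \neq 0$ (the constant term is $1$ when $c \geq 1$; for $c = 0$, the leading term $x^d$ has coefficient $1$; and in the degenerate sub-case $c = d$ where the leading monomials cancel modulo $2$, the constant term $1$ still witnesses nonvanishing), while $\deg P \leq \max(c, d) \leq t-1$, contradicting the minimality of the degree of $\zeta$. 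If instead $c < 0$, rewrite the relation as $\zeta^{-c}(1+\zeta)^d = 1$, so $Q(x) = x^{-c}(1+x)^d + 1$ vanishes at $\zeta$; here $Q$ has constant term $1$ (hence is nonzero) and degree exactly $(-c) + d = (a' + b') - (a + b) \leq t - 1$ by the window constraint defining $T$, again a contradiction.

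The final step is a counting estimate. The number of pairs $(a, b)$ with $0 \leq a, b \leq t-1$ and $a + b = k$ is $\min(k+1, 2t-1-k)$, a triangular distribution peaking at $k = t-1$ with height $t$. The choice $m = \lfloor t/2 \rfloor$ centers the window of length $t$ around this peak, and a direct summation gives $|T| = 3t^2/4$ when $t$ is even and $|T| = (3t^2+1)/4$ when $t$ is odd; in either case $|T| \geq 3t^2/4$. Combining with $|T| \leq p$ yields $t \leq ((4/3)p)^{1/2}$, as claimed. The main subtlety is the degenerate case $c = d$ in the polynomial analysis, where the leading monomials of $x^c$ and $(1+x)^d$ cancel in characteristic $2$ and naive degree counting fails; reading off nonvanishing from the constant term instead of the leading term handles this cleanly.
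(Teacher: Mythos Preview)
Your proof is correct. It shares the paper's setup---reduce to $\zeta,\,1+\zeta\in C_p$ with $\zeta$ a primitive $p$-th root, consider the map $(a,b)\mapsto \zeta^a(1+\zeta)^b$, and exploit that no nonzero polynomial in $\mathbb{F}_2[x]$ of degree $<t$ can vanish at $\zeta$---but organizes the combinatorics differently. The paper works over the full grid $\{0,\dots,t-1\}^2$: it first shows that any collision forces termwise comparability of the two index pairs, then proves that there is at most one nonzero pair $(a,b)$ with $\zeta^a(1+\zeta)^b=1$ and that any such pair satisfies $a+b\ge t$; this bounds the excluded region by $(t-a)(t-b)\le t^2/4$, leaving at least $3t^2/4$ distinct images. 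You instead restrict from the outset to the diagonal band $m\le a+b\le m+t-1$, chosen precisely so that in every collision case the relevant polynomial has degree $\le t-1$, making the map outright injective on $T$; you then count $|T|\ge 3t^2/4$ directly. Your route avoids the kernel-uniqueness argument and the case split on whether a nontrivial relation $\pi_{a,b}=1$ exists, at the price of a slightly ad hoc choice of domain; the paper's route gives a little more structural information about collisions on the full grid. Both reach the same bound.
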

\begin{proof}
Let $t=\mathrm{ord}_2(p).$ Note that $C_p\subseteq
\mathbb{F}_{2^t}.$ Note also that all elements of $C_p$ other than
the multiplicative identity are proper elements of
$\mathbb{F}_{2^t}.$ Therefore for every $\zeta\in C_p$ where
$\zeta\ne 1$ and every $f(x)\in \mathbb{F}_2[x]$ such that $\deg
f\leq t-1$ we have: $f(\zeta)\ne 0.$

Observe that $\zeta_1+\zeta_2+\zeta_3=0$ implies
$\zeta_1\zeta_2^{-1}+1=\zeta_3\zeta_2^{-1}.$ This yields
$\left(\zeta_1\zeta_2^{-1}+1\right)^p=1.$ Put
$\zeta=\zeta_1\zeta_2^{-1}.$ Note that $\zeta\ne 1$ and
$\zeta,1+\zeta \in C_p.$ Consider the products $\pi_{i,j} =
\zeta^i (1+\zeta)^j \in C_p$ for $0 \leq i,j \leq t-1.$ Note that
$\pi_{i,j}, \pi_{k,l}$ cannot be the same if $i \geq k$ and $l
\geq j$, as then
\[
\zeta^{i-k} - (1+\zeta)^{l-j} = 0,
\]
but the left side has degree less than $t$. In other words, if
$\pi_{i,j}=\pi_{k,l}$ and $(i,j) \neq (k,l)$, then the pairs
$(i,j)$ and $(k,l)$ are comparable under termwise comparison. In
particular, either $(k,l) = (i+a,j+b)$ or $(i,j) = (k+a,l+b)$ for
some pair $(a,b)$ with $\pi_{a,b}=1$.

We next check that there cannot be two distinct nonzero pairs
$(a,b), (a^\prime,b^\prime)$ with
$\pi_{a,b}=\pi_{a^\prime,b^\prime}=1$. As above, these pairs must
be comparable; we may assume without loss of generality that $a
\leq a^\prime, b \leq b^\prime$. The equations $\pi_{a,b}=1$ and
$\pi_{a^\prime-a,b^\prime-b}=1$ force $a+b \geq t$ and
$(a^\prime-a) + (b^\prime-b) \geq t$, so $a^\prime+b^\prime \geq
2t$. But $a^\prime,b^\prime \leq t-1$, contradiction.

If there is no nonzero pair $(a,b)$ with $0 \leq a,b \leq t-1$ and
$\pi_{a,b}=1$, then all $\pi_{i,j}$ are distinct, so $p \geq t^2$.
Otherwise, as above, the pair $(a,b)$ is unique, and the pairs
$(i,j)$ with $0 \leq i,j \leq t-1$ and $(i,j) \not\geq (a,b)$ are
pairwise distinct. The number of pairs excluded by the condition
$(i,j) \not\geq (a,b)$ is $(t-a)(t-b)$; since $a+b \geq t$,
$(t-a)(t-b) \leq t^2/4$. Hence $p \geq t^2 - t^2/4 = 3t^2/4$ as
desired.
\end{proof}

While the necessary condition given by lemma~\ref{Lemma:3Sums} is
quite far away from the sufficient condition given by
lemma~\ref{Lemma:WeilYields3Sums}, it nonetheless suffices for
checking that for most primes $p$, there do not exist three $p$-th
roots of unity summing to zero. For instance, among the 664578 odd
primes $p \leq 10^8$, all but 550 are ruled out by Lemma 23.
(There is an easy argument that $t$ must be odd if $p > 3$; this
cuts the list down to 273 primes.) Each remaining $p$ can be
tested by computing $\mathrm{gcd}(x^p + 1, (x+1)^p + 1)$; the only
examples we found that did not satisfy the condition of
lemma~\ref{Lemma:WeilYields3Sums} were $(p,t) = (73,9), (262657,
27), (599479, 33), (121369, 39).$

\subsection{Summary}\label{SubSec:NegativeSummary}
In the beginning of this section~\ref{Sec:SetsToFactors} we argued
that in order to use the method of~\cite{Y_nice}, (i.e.,
proposition~\ref{Prop:SetsToCodes}) to obtain $k$-query locally
decodable codes of length $\exp(n^\epsilon)$ for some fixed $k$
and all $\epsilon>0,$ one needs to exhibit a $k$-nice sequence of
subsets of finite fields. In what follows we use technical results
of the previous subsections to show that the existence of a
$k$-nice sequence implies that infinitely many Mersenne numbers
have large prime factors.
\begin{theorem}
\label{Theorem:kNiceSequenceYieldsFactors} Let $k$ be odd. Suppose
there exists a $k$-nice sequence of subsets of finite fields; then
for infinitely many values of $t$ we have
\begin{equation}
\label{Eqn:PLowerk} P(2^t-1)\geq (t/2)^{1+1/(k-2)}.
\end{equation}
\end{theorem}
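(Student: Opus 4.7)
The plan is to chain together the two central results of this section: Lemma~\ref{Lemma:NiceSequenceYieldsShortSums} (which converts a $k$-nice sequence into infinitely many primes admitting short additive dependencies among their roots of unity) and the necessary-condition lemma for $k$-term vanishing sums of $p$-th roots of unity (stated here as the second Lemma~\ref{Lemma:kSums}). The theorem should then follow by pure bookkeeping.

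In detail, I would first invoke Lemma~\ref{Lemma:NiceSequenceYieldsShortSums} to extract from the hypothesized $k$-nice sequence an infinite set $\mathbb{P}$ of primes such that for every $p \in \mathbb{P}$ some $k$ elements of $C_p$ sum to zero. For each $p \in \mathbb{P}$, set $t_p = \mathrm{ord}_2(p)$ and apply Lemma~\ref{Lemma:kSums} to obtain
\[
t_p \;\leq\; 2 p^{1 - 1/(k-1)}.
\]
Rearranging gives $p \geq (t_p/2)^{(k-1)/(k-2)} = (t_p/2)^{1 + 1/(k-2)}$. Since $p \mid 2^{t_p} - 1$, we conclude $P(2^{t_p}-1) \geq p \geq (t_p/2)^{1+1/(k-2)}$, which is exactly the desired bound applied to $t = t_p$.

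It remains to check that the integers $t_p$ take infinitely many distinct values as $p$ ranges over $\mathbb{P}$. This is immediate: for any fixed $t$, the condition $\mathrm{ord}_2(p) = t$ forces $p$ to be a prime divisor of $2^t - 1$, of which there are only finitely many. Since $\mathbb{P}$ is infinite, the set $\{t_p : p \in \mathbb{P}\}$ must be infinite as well, completing the proof. There is really no significant obstacle left at this stage; all the substance lies in the preceding subsections, and the theorem is essentially an assembly of the previously proved lemmas together with the arithmetic manipulation above.
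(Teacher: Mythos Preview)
Your proposal is correct and follows essentially the same route as the paper's proof: combine Lemma~\ref{Lemma:NiceSequenceYieldsShortSums} with Lemma~\ref{Lemma:kSums}, then invert the inequality $t\leq 2p^{1-1/(k-1)}$ to get $p\geq (t/2)^{1+1/(k-2)}$ and use $p\mid 2^t-1$. You even make explicit the point (left implicit in the paper) that the values $t_p=\mathrm{ord}_2(p)$ must be infinite in number because each fixed $t$ admits only finitely many prime divisors of $2^t-1$.
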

\begin{proof} Using lemmas~\ref{Lemma:NiceSequenceYieldsShortSums}
and~\ref{Lemma:kSums} we conclude that a $k$-nice sequence yields
infinitely many primes $p$ such that $\mathrm{ord}_2(p)\leq
2p^{1-1/(k-1)}.$ Let $p$ be such a prime and
$t=\mathrm{ord}_2(p).$ Then $P(2^t-1)\geq (t/2)^{1+1/(k-2)}.$
\end{proof}
A combination of lemmas~\ref{Lemma:NiceSequenceYieldsShortSums}
and~\ref{Lemma:3Sums} yields a slightly stronger bound for the
special case of $3$-nice sequences.
\begin{theorem}
\label{Theorem:3NiceSequenceYieldsFactors} Suppose there exists a
$3$-nice sequence of subsets; then for infinitely many values of
$t$ we have
\begin{equation}
\label{Eqn:PLower3} P(2^t-1)\geq (3/4)t^{2}.
\end{equation}
\end{theorem}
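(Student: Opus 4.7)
The proof is essentially a one-line combination of two results already proved in the preceding subsections, and mirrors the argument used for Theorem \ref{Theorem:kNiceSequenceYieldsFactors}, but substituting the sharper $k=3$ bound in place of the general Lemma \ref{Lemma:kSums}.

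My plan is as follows. Given a $3$-nice sequence $\{S_i \subseteq \mathbb{F}_{q_i}^*\}_{i\geq 1}$, first invoke Lemma \ref{Lemma:NiceSequenceYieldsShortSums} with $k=3$ to obtain an infinite set $\mathbb{P}$ of primes $p$ with the property that some three elements of $C_p \subseteq \overline{\mathbb{F}}_2^*$ sum to zero. For every such prime $p$, apply Lemma \ref{Lemma:3Sums} (the specialized $k=3$ necessary condition) to deduce
\[
t := \mathrm{ord}_2(p) \leq \left((4/3)p\right)^{1/2},
\]
which rearranges to $p \geq (3/4)t^2$. Since by definition $p \mid 2^t - 1$, this prime $p$ witnesses $P(2^t - 1) \geq p \geq (3/4)t^2$, which is precisely the bound claimed in~(\ref{Eqn:PLower3}).

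The one small point that needs care is to ensure that the values of $t$ obtained this way are not all the same but actually constitute an infinite set, since the theorem is stated for infinitely many $t$. This is immediate: $p \mid 2^t - 1$ forces $2^t - 1 \geq p$, so $t \geq \log_2(p+1)$, and as $p$ ranges over the infinite set $\mathbb{P}$ we get $t \to \infty$. I do not anticipate any genuine obstacle here; the substantive work has already been done in establishing the two lemmas, and the present theorem merely chains them together with the improved exponent $1+1/(k-2) = 2$ coming from the sharper root-of-unity bound available in the $k=3$ case.
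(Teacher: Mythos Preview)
Your proof is correct and follows exactly the approach the paper itself takes: combine Lemma~\ref{Lemma:NiceSequenceYieldsShortSums} with the sharper $k=3$ bound of Lemma~\ref{Lemma:3Sums}, then rearrange $t \leq ((4/3)p)^{1/2}$ to $p \geq (3/4)t^2$ and note $p \mid 2^t-1$. Your explicit justification that the resulting values of $t$ are unbounded (via $t \geq \log_2(p+1)$) is a nice touch that the paper leaves implicit.
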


We would like to remind the reader that although the lower bounds
for $P(2^t-1)$ given by~(\ref{Eqn:PLowerk})
and~(\ref{Eqn:PLower3}) are extremely weak light of the widely
accepted conjecture saying that the number of Mersenne primes is
infinite, they are substantially stronger than what is currently
known unconditionally~(\ref{Eqn:SteImpliedPrimeFactor}).

\section{Conclusion}\label{Sec:Conclusions}

Recently~\cite{Y_nice} came up with a novel technique for
constructing locally decodable codes and obtained vast
improvements upon the earlier work. The construction proceeds in
two steps. First~\cite{Y_nice} shows that if there exist subsets
of finite fields with certain 'nice' properties then there exist
good codes. Next~\cite{Y_nice} constructs nice subsets of prime
fields $\mathbb{F}_p$ for Mersenne primes $p$.

In this paper we have undertaken an in-depth study of nice subsets
of general finite fields. We have shown that constructing nice
subsets is closely related to proving lower bounds on the size of
largest prime factors of Mersenne numbers. Specifically we
extended the constructions of~\cite{Y_nice} to obtain nice subsets
of prime fields $\mathbb{F}_p$ for primes $p$ that are large
factors of Mersenne numbers. This implies that strong lower bounds
for size of the largest prime factors of Mersenne numbers yield
better locally decodable codes. Conversely, we argued that if one
can obtain codes of subexponential length and constant query
complexity through nice subsets of finite fields then infinitely
many Mersenne numbers have prime factors larger than known
currently.

\section*{Acknowledgements}
Kiran Kedlaya's research is supported by NSF CAREER grant
DMS-0545904 and by the Sloan Research Fellowship. Sergey Yekhanin
would like to thank Swastik Kopparty for providing the
reference~\cite{BourgainChang} and outlining the proof of
lemma~\ref{Lemma:SmallFourierYeildsShortSums}. He would also like
to thank Henryk Iwaniec, Carl Pomerance and Peter Sarnak for their
feedback regarding the number theory problems discussed in this
paper.

\end{document}